\newif\ifieee
\newtheorem{theorem}{Theorem}[section]
\newtheorem{lemma}[theorem]{Lemma}
\newtheorem{proposition}[theorem]{Proposition}
\newtheorem{corollary}[theorem]{Corollary}
\theoremstyle{definition}
\newtheorem{definition}[theorem]{Definition}
\newtheorem{remark}[theorem]{Remark}
\newcommand{\qand}{\quad\text{and}\quad}
\newcommand\tr{\mathop{\rm tr}}
\let\epsilon\varepsilon
\newenvironment{sbmatrix}{\left[\begin{smallmatrix}}{\end{smallmatrix}\right]}
\newcommand{\thetatilde}{\wt{\vartheta}}
\DeclareMathOperator{\mpsdLetter}{m}
\newcommand{\mpsd}[1][]{\mpsdLetter\ensuremath{{}^{\kern -.5ex +}_{\kern-.25ex #1}}}
\DeclareMathOperator{\diag}{diag}
\DeclareMathOperator{\spn}{span}
\DeclareMathOperator{\inter}{int}
\DeclareMathOperator{\plex}{plex}
\DeclareMathOperator{\qinter}{qint}
\DeclareMathOperator{\rank}{rank}
\DeclareMathOperator{\cc}{int}
\DeclareMathOperator{\iso}{iso}
\DeclareMathOperator{\msr}{msr}
\newcommand{\cl}[1]{\mathcal{#1}}
\newcommand{\bb}[1]{\mathbb{#1}}
\def\dc#1{\expandafter\def\csname#1\endcsname{\mathcal{#1}}}
\def\db#1{\expandafter\def\csname b#1\endcsname{\mathbb{#1}}}
\def\loopy#1#2{%
  \def#1##1{\def\next{#2{##1}#1}\ifx##1\relax\let\next\relax\fi\next}}
\loopy{\makemathcals}{\dc}\loopy{\makemathbbs}{\db}
\newcommand*\wt[1]{\mathpalette\wthelper{#1}}
\newcommand*\wthelper[2]{%
        \hbox{\dimen@\accentfontxheight#1%
                \accentfontxheight#11.2\dimen@
                $\m@th#1\widetilde{#2}$%
                \accentfontxheight#1\dimen@
        }%
}
\newcommand*\accentfontxheight[1]{%
        \fontdimen5\ifx#1\displaystyle
                \textfont
        \else\ifx#1\textstyle
                \textfont
        \else\ifx#1\scriptstyle
                \scriptfont
        \else
                \scriptscriptfont
        \fi\fi\fi3
}
\begin{document}

\title{Complexity and capacity bounds for quantum channels} 

\author{\IEEEauthorblockN{Rupert H. Levene\IEEEauthorrefmark{1}, Vern I. Paulsen\IEEEauthorrefmark{1}, and Ivan G. Todorov\IEEEauthorrefmark{3}}\\[6pt]
  \IEEEauthorblockA{\IEEEauthorrefmark{1}School of Mathematics and Statistics, University College Dublin, Belfield, Dublin~4, Ireland}\\
  \IEEEauthorblockA{\IEEEauthorrefmark{2}Institute for Quantum Computing and Dept.~of Pure Math., University of Waterloo, Waterloo, Ontario, Canada}\\
  \IEEEauthorblockA{\IEEEauthorrefmark{3}Mathematical Sciences Research Centre, Queen's University Belfast, Belfast BT7 1NN, United Kingdom}}
    
\date{17 October 2017}

\maketitle

\begin{abstract}
  We generalise some well-known  graph parameters to
  operator systems
  by considering their underlying quantum channels.
  In particular, we introduce the quantum complexity as the dimension
  of the smallest co-domain Hilbert space a quantum channel requires
  to realise a given operator system as its non-commutative
  confusability graph. We describe quantum complexity as a generalised
  minimum semidefinite rank and, in the case of a graph operator
  system, as a quantum intersection number.  The quantum complexity
  and a closely related quantum version of orthogonal rank turn out to
  be upper bounds for the Shannon zero-error capacity of a quantum
  channel, and we construct examples for which these bounds beat the best
  previously known general upper bound for the capacity of quantum
  channels, given by the quantum Lov\'asz theta number.
\end{abstract}

\section{Introduction}\label{s_intro}

In 1956, Shannon \cite{shannon} initiated zero-error information theory, 
introducing the concept of the \emph{confusability graph} $G_{\cl N}$ of a finite input-output information channel 
$\cl N:X \to Y$. 
The vertex set of $G_{\cl N}$ is the input alphabet $X$, 
and two vertices form an edge in $G_{\cl N}$ if they can result in the same symbol from the output alphabet $Y$ 
after transmission via $\cl N$. 
Shannon showed that the zero-error behaviour of $\cl N$, and various measures of its capacity, 
depend only on the graph $G_{\cl N}$.  
In particular, if two channels have the same confusability graphs, then they have the same one-shot zero-error capacity and the same 
Shannon capacity. It is not hard to see that every graph with vertex set $X$ is the confusability graph of some---and in fact many---information channels. Thus, a natural measure of the \emph{complexity} of a graph 
$G$ on $X$ is the minimal cardinality of $Y$ over all realisations of $G$ as the confusability graph of a channel $\cl N:X \to Y$. 

Similarly, Duan, Severini and Winter \cite{dsw} showed that every quantum channel~$\Phi:M_n \to M_k,$ 
where $M_m$ denotes the set of complex $m \times m$ matrices, has an associated \emph{non-commutative} confusability
graph $\S_\Phi$, which they defined as a certain operator subsystem of $M_n$. 
In the case $\Phi$ is a classical channel, $\cl S_{\Phi}$ coincides with the \emph{graph operator system}
of $G_{\Phi}$ (see \cite[equation~(3)]{dsw}). 
As in Shannon's case, they proved that many natural measures of the quantum capacity of such a channel depend only 
on the operator system $\cl S$. It is again the case that every operator subsystem of $M_n$ 
arises as the non-commutative confusability graph of 
potentially many quantum channels. \ifieee\goodbreak\noindent\fi  Thus, we are lead  
to define the {\it quantum complexity} $\gamma(\cl S)$ 
of an operator subsystem $\cl S$ of $M_n$ as the least positive integer 
$k$ for which there exists 
a quantum channel $\Phi: M_n \to M_k$ such that $\cl S = \cl S_{\Phi}$. 

The goal of this paper is to study this and other, closely related, measures of complexity 
and  to derive their relationships with various measures of  capacity for classical and quantum channels. 
We will show, in particular, that the measures of complexity we introduce give upper bounds on the zero-error capacity of a quantum channel.

One of the most useful general bounds on the Shannon capacity of a classical channel comes 
from~$\vartheta$, the Lov\'asz theta function \cite{lo}.
While, for classical channels, the complexity based bound is outperformed by the Lov\'asz number (see \cite[Theorem~11]{lo}), 
we will show that there exist quantum channels for which the quantum complexity bound on 
capacity we suggest is better than the bounds arising from the non-commutative analogue~$\thetatilde$ of the Lov\'asz number 
introduced in \cite{dsw}. 
In fact, we will show that there exist quantum channels $\Phi_k$ for $k\in \bN$ for which the ratio of the quantum Lov\'asz theta number $\thetatilde(\Phi_k)$
to the quantum complexity $\gamma(\Phi_k)$ introduced herein is arbitrarily large, while the upper bound $\gamma(\Phi_k)$ for the quantum Shannon zero-error capacity $\Theta(\Phi_k)$ is accurate to within a factor of two (see Corollary~\ref{cor:betterbounds}).

We
will see that the classical complexity of a graph $G$ is a familiar parameter which  
coincides with its {\it intersection number} (provided $G$ lacks isolated vertices).
For operator systems, the measure of quantum complexity we propose has not been previously studied.
We will characterise it in several different ways.
Since every graph $G$ gives rise to a canonical operator system $\cl S_G$, it can in addition be endowed with a 
{\it quantum complexity,} which can be strictly smaller than its classical counterpart, and can be equivalently characterised as a 
\emph{quantum intersection number} of~$G$. \goodbreak
\medskip

The paper is organised as follows. 
In Section~\ref{sec:graphs}, we begin by recalling the graph theoretic parameters needed in the sequel 
and show that our measure of the classical complexity of a graph coincides with its intersection number.  
In Section~\ref{sec:qint}, we turn to the quantum complexity of a graph, and show that it coincides with 
its minimum semi-definite rank (modulo any isolated vertices). 
In Section~\ref{sect:complexity}, we achieve a parallel development for operator systems, considering simultaneously 
a closely related notion of subcomplexity that coincides with the quantum chromatic number  
introduced in \cite{stahlke}.
We show that our operator system parameters are genuine extensions of the graph theoretic ones (Theorem~\ref{theorem:gen}) and explore similarities and differences between their behaviour on commutative and non-commutative graphs.
In Section~\ref{sec:applications-to-capacity}, we establish the bounds on capacities in terms of complexities (Theorem~\ref{th_boundbeta}) and show by example that these bounds can improve dramatically on the Lov\'asz $\vartheta$ bound. Finally, in Appendix~\ref{sec:order-proofs} we establish the partial ordering among various bounds on the quantum Shannon zero-error capacity, from this paper and elsewhere.

\medskip

In the sequel, we employ standard notation from linear algebra:
we denote by $M_{k,n}$ the space of all $k$ by $n$ matrices with complex entries, 
and set $M_n = M_{n,n}$. 
We let $\|X\|$ be the operator norm of a matrix $X\in M_{k,n}$, so that $\|X\|^2$ is the largest eigenvalue of $X^*X$.
We equip $M_n$ with the inner product given by $\langle X,Y\rangle = \tr(Y^*X)$, where $\tr(Z)$ is the 
trace of a matrix $Z\in M_n$. 
We write $I_n$ (or simply~$I$) for the identity matrix in $M_n$. 
The positive cone of $M_n$ 
(that is, the set of all positive semi-definite $n$ by $n$ matrices) will be denoted by $M_n^+$;
if~$\S\subseteq M_n$, we let $\S^+ = \S\cap M_n^+$.
We write
$\bb{R}^k_+$ for the cone of all vectors in $\bb{R}^k$ with non-negative entries, and
let $(e_i)_{i=1}^k$ be the standard basis of $\bb{C}^k$.
If $v,w\in \bb{C}^n$, we denote by $vw^*$ the rank one operator on $\bb{C}^n$ given by $(vw^*)(z) = \langle z,w\rangle v$, $z\in \bb{C}^n$.
The cardinality of a set $S$ will be denoted by $|S|$.

\section{Graph parameters}\label{sec:graphs}

In this section, we recall some graph theoretic parameters 
and point out their connection with Shannon's confusability graphs and channel capacities.
We start by establishing notation and terminology. Unless otherwise stated, all graphs in this paper will be simple graphs: undirected graphs without loops and at most one edge between any pair of vertices.
Let $n\in \bN$ and let $G$ be a graph with vertex
set~$[n] := \{1,\dots,n\}$.  For $i,j\in [n]$ we write $i\simeq j$ or
$i\simeq_G j$ to denote non-strict adjacency: either $i=j$, or $G$
contains the edge~$ij$. %
We denote by $G^c$ the \emph{complement} of the graph $G$; by definition, $G^c$ has vertex set $[n]$
and, for distinct $i,j\in [n]$, we have $i\simeq_{G^c} j$ if and only if $i\not\simeq_{G^c} j$.
For graphs $H,G$ with vertex set~$[n]$, we write $H\subseteq G$, and say that $H$ is a subgraph of $G$, if
every edge of $H$ is an edge of~$G$. %
An \emph{independent set} in $G$ is a subset of its vertices between which there are no edges of~$G$.
Let $k\in \bN$, and for an $n$-tuple $x = (x_1,\dots,x_n)$, where
each $x_i\in \bC^k$ is a non-zero vector, we define $G(x)$ to be the \emph{non-orthogonality graph} of~$x$, with vertex set~$[n]$ and adjacency relation given by
\[ i\simeq_{G(x)}j\iff \langle x_j,x_i\rangle\ne 0.\]

Let $G$ be a graph with vertex set $[n]$. 
Consider the following graph parameters:

\begin{itemize}
\item[(a)] the \emph{independence number} $\alpha(G)$ of $G$, given by 
  \[\alpha(G) = \max\{|S|\colon \text{$S$ is independent in~$G$}\};\]
\item[(b)] the \emph{quantum complexity}
  \[
    \hspace*{-.6cm}\gamma(G) = \min\{k\in \bN\colon \text{$G(x)=G$ for some $x\in (\bC^k\setminus\{0\})^n$}\}\]
  and the \emph{quantum subcomplexity}
  \begin{align*}
    \hspace*{-.6cm}\beta(G) &= \min\{k\in \bN\colon \text{$G(x)\subseteq G$ for some $x\in (\bC^k\setminus\{0\})^n$}\}\\
    &= \min\{\gamma(H)\colon \text{$H$ is a subgraph of~$G$}\};
  \end{align*}
\item[(c)] 
the \emph{intersection number} $\cc(G)$ of~$G$, given by 
\ifieee
  \begin{multline*} \cc(G) = \min\{k\in \bN\colon \text{$G(x)=G$ }\ifieee\\\fi\text{for some $x\in(\bR_+^k\setminus\{0\})^n$}\}.
  \end{multline*}
\else
\[\cc(G) = \min\{k\in \bN\colon \text{$G(x)=G$ }\text{for some $x\in(\bR_+^k\setminus\{0\})^n$}\}.\]
\fi
\end{itemize}

\begin{remark}\label{r_21}
  The graph parameters defined above are well known, and will be generalised to non-commutative graphs in Section~\ref{sect:complexity}.

\smallskip

\noindent {\bf (i)}  The independence number $\alpha(G)$ is standard in graph
    theory \cite{gr}. 

\smallskip

\noindent {\bf (ii)} Writing $G^c$ for the complement of~$G$, we have
    $\beta(G)=\xi(G^c)$ where $\xi$ is the orthogonal rank (see, for example,~\cite{ss}). The
    parameter $\gamma(G)$ is the minimum vector rank of~$G$ in the
    terminology of~\cite{jmn}, and is equal to $\msr(G)+|\iso(G)|$
    where $\msr(G)$ is the classical minimum semidefinite rank
    of~$G$~\cite{fh,hprs} and $\iso(G)$ is the set of isolated
    vertices of~$G$.
    \smallskip

\noindent {\bf (iii)} Let $\inter_{\rm st}(G)$ be the set-theoretic 
    intersection number of~$G$; thus, $\inter_{\rm st}(G)$ is the smallest positive integer $m$ 
    for which there exist non-empty sets $R_i \subseteq [m]$, $i = 1,\dots,n$, such that $i\simeq_G j$ if and only if $R_i\cap R_j \neq \emptyset$. (Note that usually in the literature one relaxes the assumption that the sets $R_i$ be 
    non-empty~\cite{mm}; however, it is more convenient for us to work with the definition above.)
   We claim that $\cc(G) = \inter_{\rm st}(G)$. 
   Indeed, first suppose that $\iso(G) = \emptyset$, and let $m = \inter_{\rm st}(G)$. Choose 
    $R_i\subseteq [m]$ for $i\in [n]$, so that
    $i\simeq_G j\iff R_i\cap R_{j}\ne \emptyset$. Note that, since $\iso(G)=\emptyset$, we have $R_i\ne\emptyset$ for every~$i$. Defining
    $x_i = \sum_{r\in R_i}e_r$ for $i\in [n]$, we have that
    $x_i\in \bR_+^k$, $i\in [n]$, and $G(x) = G$; hence $\cc(G)\leq k$. 
    Conversely, suppose that $x = (x_1, \dots, x_n)$ is a tuple of non-zero vectors in $\bb{R}^k_+$ such that 
    $G(x) = G$. Let $R_i = \{l\in [k] : \langle x_i,e_l\rangle \neq 0\}$. The non-negativity of the entries of $x_i$, $i = 1,\dots,n$, 
    implies that $i\simeq_G j$ if and only if $R_i\cap R_j \neq \emptyset$; thus, $\inter_{\rm st}(G)\leq \cc(G)$ and so 
    $\inter_{\rm st}(G) = \cc(G)$. 

\smallskip

  It is straightforward from the definitions that
\begin{equation}\label{eq_abgi}
\alpha(G)\leq \beta(G)\leq \gamma(G)\leq \cc(G)
\end{equation}
for every
  graph~$G$. We note that these inequalities will be generalised to arbitrary operator systems in $M_n$ 
  in Theorem~\ref{thm:inncc} below.
\end{remark}

We now review some of Shannon's ideas \cite{shannon}.
Suppose that we have a finite set $X$, which we view as an alphabet
that we wish to send through a noisy channel $\cl N$ in order to obtain symbols from another alphabet, say $Y$. 
We let $p(y|x)$ denote the probability that, if we started with the symbol $x \in X$, then after this process, 
the symbol $y \in Y$ is received. We require that every $x\in X$ is transformed into some $y\in Y$, that is,
$\sum_{y \in Y} p(y|x) = 1$, for all $x \in X$. 
The column-stochastic matrix $\cl N = (p(y|x))$, indexed by $Y \times X$, 
is often referred to as the {\it noise operator} of the channel. 
We will write $\cl N : X \to Y$ to indicate the matrix $(p(y|x))$, 
and refer to such matrices as {\it (classical) channels}. The {\it confusability graph} of $\cl N$ is the graph $G_{\cl N}$ 
with vertex set $X$ for which, given two distinct $x,x'\in X$, 
the pair $xx^{\prime}$ is an edge if and only if there exists $y \in Y$ such that $p(y|x)p(y|x^{\prime}) > 0$. 
Equivalently, $x\simeq_{G_{\cl N}} x'$ if and only if there exists $y\in Y$ such that the symbols $x$ and $x'$ 
can be transformed into the same $y$ via $\cl N$ and hence confused. 

The {\it one-shot zero-error capacity of $\cl N$}, denoted $\alpha(\cl N)$, 
is defined to be the cardinality of the largest subset $X_1$ of $X$ such that, whenever an element of $X_1$ is sent via $\cl N$, 
no matter which element of $Y$ is received, the receiver can determine with certainly 
the input element from $X_1$. It is straightforward that $\alpha(\cl N) =\alpha(G_{\cl N})$.

\begin{definition}\label{d_plexi}
Let $G$ be a graph with vertex set $[n]$. The \emph{complexity $\plex(G)$ of $G$} is the 
minimal cardinality of a set $Y$ such that $G = G_{\cl N}$ for some channel $\cl N : [n] \to Y$.  

If $\cl N:X \to Y$ is a channel, we set $\plex(\cl N) = \plex(G_{\cl N})$ and call it 
this parameter the \emph{complexity of $\cl N$}. 
\end{definition}  

\begin{proposition}\label{p_plexeqcc} 
Let $G$ be a graph with vertex set $[n]$. Then $\plex(G) = \cc(G)$. 
In other words, if $\cl N : X\to Y$ is  a channel then $\plex(\cl N) = \inter(G_{\cl N})$.
\end{proposition}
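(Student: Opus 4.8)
The plan is to prove the two inequalities $\plex(G)\le\cc(G)$ and $\cc(G)\le\plex(G)$ separately, in each case converting between a realisation of $G$ as a confusability graph $G_{\cl N}$ and a realisation of $G$ as a non-orthogonality graph $G(x)$ of a tuple of non-negative vectors, and then invoke the identity $\cc(G)=\inter_{\rm st}(G)$ recorded in Remark~\ref{r_21}(iii). The key observation linking the two pictures is that for vectors $u,v$ with non-negative entries, $\langle u,v\rangle\neq 0$ if and only if there is a coordinate $y$ with $u_y v_y>0$; this is exactly the "overlap" condition defining an edge of a confusability graph.

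For $\plex(G)\le\cc(G)$: let $k=\cc(G)$ and pick non-zero $x_i\in\bR^k_+$, $i\in[n]$, with $G(x)=G$. Normalise in the $\ell^1$-sense by setting $p(y|i) = (x_i)_y / \sum_{z}(x_i)_z$; since each $x_i\neq 0$ and has non-negative entries, each column of the resulting matrix $\cl N=(p(y|i))$, indexed by $[k]\times[n]$, is a genuine probability vector, so $\cl N:[n]\to[k]$ is a channel. For distinct $i,j$, the pair $ij$ is an edge of $G_{\cl N}$ iff $p(y|i)p(y|j)>0$ for some $y$, iff $(x_i)_y(x_j)_y>0$ for some $y$, iff $\langle x_j,x_i\rangle\neq 0$ (using non-negativity), iff $i\simeq_G j$. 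Hence $G_{\cl N}=G$ and $\plex(G)\le k=\cc(G)$.

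For the reverse inequality: let $Y$ be a set with $|Y|=\plex(G)$ and $\cl N:[n]\to Y$ a channel with $G_{\cl N}=G$. Identify $Y$ with $[m]$, $m=\plex(G)$, and let $x_i\in\bR^m_+$ be the $i$-th column of $\cl N$, i.e. $(x_i)_y=p(y|i)$. Each $x_i$ is non-zero (its entries sum to $1$), and as above $\langle x_j,x_i\rangle\neq 0$ iff $p(y|i)p(y|j)>0$ for some $y$, which for distinct $i,j$ is equivalent to $i\simeq_G j$ and for $i=j$ holds trivially. Thus $G(x)=G$ with $x\in(\bR^m_+\setminus\{0\})^n$, so $\cc(G)\le m=\plex(G)$. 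Combining the two inequalities gives $\plex(G)=\cc(G)$, and the final assertion $\plex(\cl N)=\inter(G_{\cl N})$ is then immediate from the definition $\plex(\cl N)=\plex(G_{\cl N})$ together with $\cc=\inter$ as used in the excerpt.

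I do not anticipate a serious obstacle here: the argument is essentially a dictionary between stochastic matrices and tuples of non-negative vectors, and the only point requiring a moment's care is the handling of the diagonal (the case $i=j$) and of isolated vertices — a vertex $i$ is isolated in $G_{\cl N}$ precisely when its output distribution is disjointly supported from all others, which is consistent with $x_i$ being a non-zero non-negative vector with $\langle x_i,x_j\rangle=0$ for all $j\neq i$, so no special treatment is needed beyond noting $x_i\neq 0$. One should also make sure the definition of $G(x)$ in the excerpt, which builds in $i\simeq_{G(x)}i$ automatically since $\langle x_i,x_i\rangle>0$, matches the convention that $G_{\cl N}$ has all "loops" implicitly (non-strict adjacency), which it does.
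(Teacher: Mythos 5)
Your proof is correct and is essentially the paper's argument: both directions rest on the observation that for non-negative vectors the inner product is non-zero exactly when the supports overlap, which is the confusability condition. The only cosmetic difference is that the paper passes through the support sets $R_i$ and uniform distributions (via the identity $\cc(G)=\inter_{\rm st}(G)$ from Remark~\ref{r_21}(iii)), whereas you work directly with the columns of the stochastic matrix and an $\ell^1$-normalisation.
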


\begin{proof} 
Let $\cl N:[n] \to Y$ be a channel so that $G= G_{\cl N}$. For $1 \le i \le n$ set $R_i = \{ y \in Y: p(y|i) \ne 0 \}$.  Then $R_i$ is non-empty for each $i$  and $i \simeq_G j \iff R_i \cap R_j \ne \emptyset$.  This shows that $\cc(G)$ is a lower bound for the complexity of $G$. 

Conversely, suppose that $R_1,\dots,R_n$ are non-empty subsets of $[k]$ such that 
$i\simeq_G j$ if and only if $R_i\cap R_j\neq \emptyset$
Set
\[ p(y|i) =\begin{cases} 1/|R_i|, & y \in R_i \\ 0, & y \notin R_i \end{cases};\]
then $\cl N= (p(y|i))$ is a channel from $[n]$ to $[k]$ with $G_{\cl N} = G$. This shows that   
$\cc(G)$ is an upper bound for the complexity of $G$.
\end{proof}

\begin{remark}\label{r_123}
{\bf (i)}
We will discuss later (Remark~\ref{rk:classical-quantum}) the natural way to view a classical channel $\cl N$ as a quantum channel; 
we will see that the quantum complexity of $\cl N$, studied in Section~\ref{sect:complexity}, coincides with $\gamma(G_{\cl N})$.

\smallskip

\noindent  {\bf (ii)}  It is well-known that $\beta(G)\leq \chi(G^c)$ for any graph~$G$
(here $\chi(H)$ denotes the chromatic number of a graph $H$ \cite{gr}), so
  it is natural to ask if $\chi(G^c)$ fits into chain of
  inequalities (\ref{eq_abgi}).  
  In fact, it does not: one can check using a computer program
  that $\chi(G^c)\leq \gamma(G)$ for all graphs on $7$ or fewer
  vertices, but this inequality fails in general, for example if~$x$ is
  a Kochen-Specker set and $G=G(x)$ (see~\cite[Section~1.2]{hpswm}).
  
  \smallskip

\noindent {\bf (iii)}  For each $\pi\in\{\alpha,\beta,\gamma,\cc\}$, we have that $\pi(G)=1$ if and
  only if~$G$ is a complete graph.
  Indeed, if~$G$ is a complete graph, then $\pi(G)\leq \inter(G)=1$, so
  $\pi(G)=1$; and if $G$ is not a complete graph, then $G^c$ contains
  at least one edge, so $\pi(G)\ge \alpha(G)>1$.
\end{remark}

\section{The quantum intersection number}\label{sec:qint}

In this section we show that the graph parameter~$\gamma$, discussed in Section~\ref{sec:graphs}, has a
reformulation in terms of projective colourings of the graph $G$, which leads to a parameter that we call the
quantum intersection number of $G$. This will allow a key step in the
proof of Theorem~\ref{theorem:gen}, where we show that $\gamma$ has a
natural operator system generalisation.

Fix~$n\in \bN$. Given $k\in\bN$, we write $\P(k)$ for the set of 
$n$-tuples $P=(P_1,\dots,P_n)$ where each $P_i$ is a non-zero
projection in $M_k$.
Let $\P_c(k)$ denote the subset of $\P(k)$ consisting of the elements $P = (P_1,\dots,P_n)$ with commuting entries:
$P_iP_j = P_jP_i$ for all~$i,j$.
To any~$P\in \P(k)$ we associate the \emph{non-orthogonality graph} $G(P)$
with vertex set~$[n]$ and edges defined by the relation 
\[ i\simeq_{G(P)}j\iff P_iP_j\ne0.\]
We define the \emph{quantum intersection number} $\qinter(G)$ of a graph $G$ with vertex set $[n]$ by letting 
\[\qinter(G)=\min\left\{k\in \bN\colon \text{$G(P)=G$ for some $P\in \P(k)$}\right\}.\]
The next proposition explains the choice of terminology.

\begin{proposition}\label{p_intintq}
Let $G$ be a graph with vertex set $[n]$. Then 
\begin{equation}\label{eq_ccG}
\cc(G)=\min\left\{k\in\bN\colon\text{$G(P)=G$ for some  $P\in \P_c(k)$}\right\}.
\end{equation}
\end{proposition}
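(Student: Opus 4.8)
The plan is to show that the right-hand side of \eqref{eq_ccG} equals $\inter_{\rm st}(G)$, which by Remark~\ref{r_21}(iii) equals $\cc(G)$; equivalently, I would directly exhibit a correspondence between tuples of non-negative vectors realising $G$ and tuples of commuting projections realising $G$, of the same ambient dimension. The key observation is that a tuple $P=(P_1,\dots,P_n)$ of non-zero \emph{commuting} projections in $M_k$ can, after a simultaneous diagonalisation, be assumed to be diagonal: there is a unitary $U$ with $U^*P_iU = \diag(\chi_{R_i})$ for subsets $R_i\subseteq[k]$, where $\chi_{R_i}$ is the $0/1$ indicator vector of $R_i$. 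Since each $P_i\ne 0$ we have $R_i\ne\emptyset$, and since the unitary conjugation does not change any product $P_iP_j$, we have $P_iP_j\ne 0 \iff \diag(\chi_{R_i})\diag(\chi_{R_j})\ne 0 \iff R_i\cap R_j\ne\emptyset$. Thus $G(P)=G$ for some $P\in\P_c(k)$ if and only if there are non-empty $R_i\subseteq[k]$ with $i\simeq_G j \iff R_i\cap R_j\ne\emptyset$, i.e.\ if and only if $\inter_{\rm st}(G)\le k$.

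Concretely I would argue the two inequalities. For ``$\ge$'': given $P\in\P_c(k)$ with $G(P)=G$, diagonalise simultaneously to get non-empty sets $R_i\subseteq[k]$ as above; these witness $\inter_{\rm st}(G)\le k$, hence $\cc(G)=\inter_{\rm st}(G)\le k$, and taking the minimum over such $k$ gives $\cc(G)\le\min\{k : G(P)=G \text{ for some } P\in\P_c(k)\}$. For ``$\le$'': start from non-zero $x\in(\bR_+^k\setminus\{0\})^n$ with $G(x)=G$ realising $k=\cc(G)$ (equivalently, from the sets $R_i=\{l\in[k]:\langle x_i,e_l\rangle\ne 0\}$ as in Remark~\ref{r_21}(iii), which are non-empty and satisfy $i\simeq_G j\iff R_i\cap R_j\ne\emptyset$), and set $P_i$ to be the diagonal projection onto $\spn\{e_l : l\in R_i\}$. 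Then the $P_i$ are non-zero, mutually commuting, and $P_iP_j\ne 0\iff R_i\cap R_j\ne\emptyset\iff i\simeq_G j$, so $P\in\P_c(k)$ with $G(P)=G$; hence the right-hand side of \eqref{eq_ccG} is $\le\cc(G)$.

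There is no real obstacle here; the only point requiring a word of care is the reduction of a commuting family of projections to diagonal form, which is the finite-dimensional spectral theorem for a commuting family of self-adjoint operators (they are simultaneously unitarily diagonalisable), together with the trivial fact that the entries of a product of diagonal matrices are the products of the entries, so $\diag(\chi_{R_i})\diag(\chi_{R_j}) = \diag(\chi_{R_i\cap R_j})$ vanishes exactly when $R_i\cap R_j=\emptyset$. Given this, the proposition follows immediately from Remark~\ref{r_21}(iii).
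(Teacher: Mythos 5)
Your proof is correct and follows essentially the same route as the paper's: one direction builds commuting (diagonal) projections onto the coordinate supports of the non-negative vectors, and the other simultaneously diagonalises the commuting projections and reads off the set system, invoking $\cc(G)=\inter_{\rm st}(G)$ from Remark~\ref{r_21}(iii). No gaps.
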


\begin{proof}
Let~$l$ be the minimum on the right hand side of (\ref{eq_ccG}), and suppose that 
$G(x)=G$ for some $n$-tuple $x$ of non-zero vectors in $\bR_+^k$. 
Letting $P_i$ be the orthogonal projection onto the linear span of 
$\{e_r\colon \langle x_i,e_r\rangle\ne 0\}$ yields a tuple $P=(P_1,\dots,P_n)\in \P_c(k)$ 
with $G(P) = G$; thus, $l \leq \cc(G)$. 

Conversely, suppose that $P = (P_1,\dots,P_n)\in \P_c(l)$ is such that $G(P)=G$.
Simultaneously diagonalising the $P_i$'s with respect to a basis $\{b_r\colon r\in [l]\}$ and defining  
$R_i = \{r\in [l]\colon P_ib_r \ne 0\}$, $i\in [n]$, we see that $i\simeq_G j$ if and only if $R_i\cap R_j \neq \emptyset$,
and it follows that $\cc(G)\leq l$.
\end{proof}

Let $t = (t_1,\dots,t_n)\in \bN^n$, and write $|t|=\sum_{i=1}^n t_i$. 
Extending ideas from~\cite{hprs}, let
\[ \P(k,t) = \left\{(P_1,\dots,P_n)\in \P(k)\colon \rank P_i=t_i,\ i\in [n]\right\}\]
and define
\[ \mpsd[t](G)=\min \left\{k\in \bN\colon \text{$G(P)=G$ for some $P\in \P(k,t)$}\right\}.\]
Consider each element $A\in
M_{|t|}$ as a block matrix $A=[A_{i,j}]_{i,j\in [n]}$ where
$A_{i,j}\in M_{t_i,t_j}$. 
We define
\ifieee
\begin{multline*}
 \F^+_t(G)= \left\{A\in M_{|t|}^+\colon A_{i,j}\ne 0\Leftrightarrow i\simeq j,\right.
\\
\left.\vphantom{M_{|t|}^+}\text{and $\rank(A_{i,i})=t_i$ for each~$i\in [n]$}\right\}
\end{multline*}
\else
\[ \F^+_t(G)= \left\{A\in M_{|t|}^+\colon A_{i,j}\ne 0\Leftrightarrow i\simeq j,\text{and $\rank(A_{i,i})=t_i$ for each~$i\in [n]$}\right\}\]
\fi
and
\[ \H^+_t(G) = \left\{A\in\F^+_t(G)\colon A_{i,i}=I_{t_i} \text{ for
  each~$i\in [n]$}\right\}.\] 
  We write $\F^+(G)=\F^+_{\mathbbm 1}(G)$ and
$\H^+(G)=\H^+_{\mathbbm 1}(G)$, where $\mathbbm 1=(1,1,\dots,1)$.
  Note that in~\cite{hprs}, $\mpsd[t](G)$ and $\H^+_t(G)$ were defined in the special case where 
  $t=(r,r,\dots,r)$ for some~$r\in \bN$.  

\begin{proposition}\label{prop:F+H+}
Let $G$ be a graph with vertex set $[n]$ and let $t\in \bN^n$. Then
\ifieee
\begin{align*}
  \mpsd[t](G)&=\min\left\{ \rank A : A \in\F^+_t(G) \right\}
        \\&= \min \left\{ \rank B : B \in \H^+_t(G) \right\}.
\end{align*}\else
\[\mpsd[t](G)=\min\left\{ \rank A : A \in\F^+_t(G) \right\}
        = \min \left\{ \rank B : B \in \H^+_t(G) \right\}.\]
\fi
\end{proposition}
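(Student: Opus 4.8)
The plan is to prove the two equalities by a cycle of inequalities connecting $\mpsd[t](G)$ with $\min\{\rank A : A \in \F^+_t(G)\}$ and $\min\{\rank B : B \in \H^+_t(G)\}$. Since $\H^+_t(G) \subseteq \F^+_t(G)$, the inequality $\min\{\rank A : A \in \F^+_t(G)\} \le \min\{\rank B : B \in \H^+_t(G)\}$ is immediate. So it suffices to show (i) $\min\{\rank A : A \in \F^+_t(G)\} \le \mpsd[t](G)$, (ii) $\mpsd[t](G) \le \min\{\rank B : B \in \H^+_t(G)\}$, which will close the loop.

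For (i), start with $P = (P_1,\dots,P_n) \in \P(k,t)$ such that $G(P) = G$ with $k = \mpsd[t](G)$. For each $i$ choose an isometry $V_i \in M_{k,t_i}$ with $V_i V_i^* = P_i$ (so $V_i^* V_i = I_{t_i}$). Assemble the block matrix $A = [A_{i,j}]$ with $A_{i,j} = V_i^* V_j \in M_{t_i,t_j}$; equivalently $A = V^* V$ where $V \in M_{k,|t|}$ has the $V_i$ as its block columns. Then $A \in M_{|t|}^+$, $A_{i,i} = V_i^* V_i = I_{t_i}$ (so $\rank A_{i,i} = t_i$), and $A_{i,j} = V_i^* V_j \ne 0$ exactly when $P_i P_j = V_i V_i^* V_j V_j^* \ne 0$, i.e.\ exactly when $i \simeq_G j$. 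Hence $A \in \F^+_t(G)$ (in fact $A \in \H^+_t(G)$), and $\rank A = \rank V^*V = \rank V \le k = \mpsd[t](G)$.

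For (ii), take $B \in \H^+_t(G)$ with $r := \rank B = \min\{\rank B : B \in \H^+_t(G)\}$. Factor $B = W^* W$ with $W \in M_{r,|t|}$, and write $W$ in block columns $W = [W_1 \,\cdots\, W_n]$ with $W_i \in M_{r,t_i}$; then $B_{i,j} = W_i^* W_j$. Since $B_{i,i} = I_{t_i}$, each $W_i$ is an isometry, so $P_i := W_i W_i^*$ is a projection in $M_r$ of rank $t_i$, in particular non-zero, giving $P = (P_1,\dots,P_n) \in \P(r,t)$. As above, $P_i P_j \ne 0 \iff W_i^* W_j \ne 0 \iff B_{i,j} \ne 0 \iff i \simeq_G j$, so $G(P) = G$, whence $\mpsd[t](G) \le r$.

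The only mild subtlety — and the point to state carefully rather than a genuine obstacle — is the equivalence $P_i P_j \ne 0 \iff V_i^* V_j \ne 0$ used in both directions: since $P_i P_j = V_i (V_i^* V_j) V_j^*$ and $V_i, V_j$ have trivial kernel and ranges equal to $\ran P_i, \ran P_j$, we have $P_i P_j = 0$ iff $V_i^* V_j = 0$. Everything else is routine linear algebra, and note that the argument in fact shows the stronger statement that the three minima coincide even if $\F^+_t(G)$ is replaced by $\H^+_t(G)$ throughout; the $\F^+_t$ formulation is recorded because it is the one needed later.
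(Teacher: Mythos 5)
There is a genuine gap in the logical structure: your three inequalities all point the same way and never close a cycle. Writing $F=\min\{\rank A: A\in\F^+_t(G)\}$, $H=\min\{\rank B: B\in\H^+_t(G)\}$ and $m=\mpsd[t](G)$, you establish $F\le H$ (from the inclusion $\H^+_t(G)\subseteq\F^+_t(G)$), $F\le m$ (your step~(i), which in fact gives the stronger $H\le m$ since the $A$ you build lies in $\H^+_t(G)$), and $m\le H$ (your step~(ii)). Together these yield $F\le m=H$, but nothing bounds $F$ from below, so the first equality $\mpsd[t](G)=F$ is not proved; a priori one could have $F<m$. The missing direction is $m\le F$: given an \emph{arbitrary} $A\in\F^+_t(G)$ with $\rank A=k$, one must produce $P\in\P(k,t)$ with $G(P)=G$. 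This is exactly the first step of the paper's proof: factor $A=X^*X$ with $X=[X_1\ \cdots\ X_n]\in M_{k,|t|}$, observe that $\rank X_i=\rank(X_i^*X_i)=\rank A_{i,i}=t_i$ even though $X_i$ need not be an isometry, and let $P_i$ be the projection onto the range of~$X_i$. Your key observation that $P_iP_j=0\iff X_i^*X_j=0$ survives verbatim here, because each $X_i$ still has trivial kernel and range equal to $\ran P_i$; so the repair is small, but it cannot be skipped.

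Note also that this missing inequality is precisely the one used later: in the proof of Theorem~\ref{thm:qinter=mpsd} one constructs a matrix $A_\epsilon\in\F^+_s(G)$ whose top-left diagonal block has rank $t_1-1$ but is only close to, not equal to, the identity, and concludes $\mpsd[s](G)\le\rank A_\epsilon$. Your closing remark that the argument shows the three minima coincide ``even if $\F^+_t(G)$ is replaced by $\H^+_t(G)$ throughout'' has this backwards: what you have proved is only the $\H^+_t$ version, and the whole content of the $\F^+_t$ formulation is the lower bound on $F$ that is absent from your argument.
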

\begin{proof}
The proof is an adaptation of~\cite[Theorem~3.10]{hprs}. 
Suppose first that $A \in \F^+_t(G)$ and $\rank A = k$. Then
  there exists a matrix $X \in M_{k , |t|}$ such that $A = X^* X$.
Write $X = [ X_1 \ X_2\ \cdots\ X_n]$, where $X_i\in M_{k,t_i}$, $i = 1,\dots,n$.  
We have 
$$\rank
  X_i=\rank(X_i^*X_i)=\rank(A_{ii})=t_i, \ \ \ i\in [n].$$
Writing $P=(P_1,\dots,P_n)$ where $P_i\in M_k$ is the orthogonal projection onto the range of~$X_i$, we have $P\in \P(k,t)$.
  Additionally, 
  $$P_iP_j\ne0\iff X_i^*X_j \ne  0 \iff A_{i,j} \ne  0  \iff i\simeq j,$$ 
  so $G(P)=G$. 
  Hence \[\mpsd[t](G) \leq \min \left\{ \rank A : A \in \F^+_t(G) \right\}.\]

  Since $\H^+_t(G)\subseteq\F^+_t(G)$, the inequality \[\min\left\{
      \rank A : A \in \F^+_t(G) \right\} \leq \min\left\{ \rank B : B
      \in \H^+_t(G) \right\}\] holds trivially.
  
  Now suppose that $P=(P_1,\dots,P_n)\in \P(k,t)$ with $G(P)=G$, and
  for each $i \in [n]$ let $X_i \in M_{k , t_i}$ be a matrix whose
  columns form an orthonormal basis
  for
  the range of~$P_i$. Define
  $X = [ X_1\ X_2\ \cdots\ X_n]\in M_{k , |t|}$ and let
  $B = X^* X \in M_{|t|}^+$, so that $\rank B = \rank X \leq k$. Note
  that the $t_i\times t_j$ block $B_{i,j}$ coincides with
  $X_i^* X_j$. Since
  $i\simeq j\iff P_i P_j\ne0\iff X_i^* X_j \ne 0$, we have
  $B_{ij} \ne 0\iff i\simeq j$. Moreover, the condition on the columns of 
$X_i$ implies that $B_{i,i} = I_{t_i}$ for each~$i$. So
  $B \in \H^+_t(G)$, hence
  \[\min \left\{ \rank B : B \in \H^+_t(G) \right\} \leq
    \mpsd[t](G).\qedhere\]
\end{proof}

\begin{theorem}\label{thm:qinter=mpsd}
  For any graph~$G$, we have $\qinter(G)=\gamma(G)$.
\end{theorem}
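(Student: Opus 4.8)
The plan is to establish $\qinter(G)\le\gamma(G)$ and $\gamma(G)\le\qinter(G)$ separately; the first is immediate and the second is the substantive one.

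For $\qinter(G)\le\gamma(G)$, take $k=\gamma(G)$ and a tuple $x=(x_1,\dots,x_n)\in(\bC^k\setminus\{0\})^n$ with $G(x)=G$, and let $P_i=\|x_i\|^{-2}x_ix_i^*\in M_k$ be the rank-one orthogonal projection onto $\bC x_i$. Then $P_iP_j=\|x_i\|^{-2}\|x_j\|^{-2}\langle x_j,x_i\rangle\,x_ix_j^*$, and since $x_ix_j^*\ne0$ we get $P_iP_j\ne0\iff\langle x_j,x_i\rangle\ne0\iff i\simeq_G j$. Hence $P=(P_1,\dots,P_n)\in\P(k)$ realises $G$, so $\qinter(G)\le k=\gamma(G)$.

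For $\gamma(G)\le\qinter(G)$, the idea is to collapse a projective realisation of $G$ to a vector realisation in the same ambient dimension. Let $k=\qinter(G)$ and fix $P=(P_1,\dots,P_n)\in\P(k)$ with $G(P)=G$; put $t_i=\rank P_i\ge1$ and let $X_i\in M_{k,t_i}$ be an isometry onto $\ran P_i$, so $P_i=X_iX_i^*$ and $X_i^*X_i=I_{t_i}$. Multiplying $P_iP_j=X_iX_i^*X_jX_j^*$ on the left by $X_i^*$ and on the right by $X_j$ shows $X_i^*X_j=0$ whenever $P_iP_j=0$; the reverse implication being clear, $P_iP_j\ne0\iff X_i^*X_j\ne0$, so the block $A_{ij}:=X_i^*X_j\in M_{t_i,t_j}$ is nonzero precisely when $i\simeq_G j$. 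For a tuple $v=(v_1,\dots,v_n)$ with $v_i\in\bC^{t_i}$, put $y_i=X_iv_i\in\bC^k$; then $\langle y_j,y_i\rangle=v_i^*A_{ij}v_j$, which is identically $0$ when $i\not\simeq_G j$ and $i\ne j$. It therefore suffices to choose $v$ so that $y_i\ne0$ for every $i$ and $v_i^*A_{ij}v_j\ne0$ whenever $i\simeq_G j$, $i\ne j$: then $y=(y_1,\dots,y_n)\in(\bC^k\setminus\{0\})^n$ satisfies $G(y)=G$, giving $\gamma(G)\le k=\qinter(G)$ and hence the desired equality.

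The remaining point is a standard genericity argument, and is the only step requiring any work. Each $X_i$ has full column rank $t_i$, hence is injective, so $y_i=0$ only if $v_i=0$. For a fixed edge $ij$ the map $v\mapsto v_i^*A_{ij}v_j$ is a polynomial in the real and imaginary parts of the entries of $v$ which is not identically zero (since $A_{ij}\ne0$), so its zero set is closed with empty interior; the same holds for each set $\{v:v_i=0\}$. A finite union of closed sets with empty interior again has empty interior, so some $v$ lies outside all of them, as required. (Alternatively, one may feed $t=(\rank P_1,\dots,\rank P_n)$ into Proposition~\ref{prop:F+H+} to produce $A=[A_{ij}]\in\F^+_t(G)$ with $\rank A\le k$, write $A=X^*X$ with $X=[X_1\ \cdots\ X_n]$, and then run the same genericity argument on the columns of the $X_i$; this even yields $\gamma(G)\le\min\{\rank A:A\in\H^+_t(G)\}$ for that $t$.)
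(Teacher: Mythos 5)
Your proof is correct, and the hard direction $\gamma(G)\le\qinter(G)$ is argued quite differently from the paper. The paper routes through the rank-graded quantities $\mpsd[t]$: it proves (via Proposition~\ref{prop:F+H+}) that decrementing one coordinate of~$t$ does not increase $\mpsd[t](G)$, using a delicate perturbation of a Gram factorisation $B=X^*X\in\H^+_t(G)$ (delete the first column, add $\epsilon$ times a matrix built from it, and tune $\epsilon$ against the quantities $a,b$ so that the off-diagonal support pattern survives and the first diagonal block drops rank by exactly one); induction then gives $\mpsd[\mathbbm 1](G)\le\mpsd[t](G)$ for all~$t$. You instead collapse a projective realisation to a vector realisation in one step by choosing generic vectors $v_i$ in the ranges of the $P_i$: since each $A_{ij}=X_i^*X_j$ is nonzero exactly on edges, the bad set of tuples $v$ is a finite union of zero sets of nonzero (sesqui)polynomials together with the subspaces $\{v_i=0\}$, all closed with empty interior, so a good $v$ exists. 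One small point worth making explicit: $v\mapsto v_i^*A_{ij}v_j$ is complex-valued, so its zero set is the common zero set of its real and imaginary parts, at least one of which is a nonzero real polynomial in the real coordinates of $v$; this still yields a closed set with empty interior, so your conclusion stands. Your argument is shorter and avoids the induction and the $\epsilon$ bookkeeping entirely; what the paper's version buys is the finer monotonicity statement $\mpsd[s](G)\le\mpsd[t](G)$ for $s\le t$, in the spirit of~\cite{hprs}, though that refinement is not needed elsewhere in the paper. Your parenthetical remark correctly observes that the same genericity argument can be run after invoking Proposition~\ref{prop:F+H+}, which ties your approach back to the paper's machinery.
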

\begin{proof}
  Directly from the definitions, we have
 \[ \qinter(G)=\min_{t\in \bN^n} \mpsd[t](G)\leq \mpsd[\mathbbm 1](G)=\gamma(G).\]
  Let~$t\in \bN^n$. We claim that if $t_1\ge2$ and
  $s=(t_1-1,t_2,t_3,\dots,t_n)$, then $\mpsd[s](G)\leq
  \mpsd[t](G)$. By symmetry and induction, this yields
  $\gamma(G)=\mpsd[\mathbbm 1](G)\leq \mpsd[t](G)$ for
  any~$t\in \bN^n$, hence $\qinter(G) = \gamma(G)$.

  To establish the claim, suppose that~$t_1\ge2$, let $k=\mpsd[t](G)$, and use Proposition~\ref{prop:F+H+} to choose $B\in
  \H_t^+(G)$ with $\rank B=k$. We may write $B=X^*X$ where $X\in
  M_{k,|t|}$. Write $X=[X_1\ X_2\ \cdots\ X_n]$ where $X_i\in
  M_{k,t_i}$, let $Y\in M_{k,|t|-1}$ be $X$ with the first
  column deleted and let $Z_1\in M_{k,t_1-1}$ be the matrix with
  every column equal to the first column of~$X$. Let $Z=[Z_1\ 0\
  \dots\ 0]\in M_{k,|t|-1}$.

  Let $Y_1\in M_{k,t_1-1}$ consist of the first $(t_1-1)$ columns
  of~$Y$. Note that $X^*Y_1$ contains $X_1^*Y_1$ as a submatrix, which
  is equal to $I_{t_1}$ with the first column removed. In particular,
  $X^*Y_1\ne 0$. Similarly, the first column of $X_1^*Z_1$ is the first
  column of~$I_{t_1}$, so $X^*Z_1\ne 0$. Define
  \begin{gather*}
    a=\min\{|w|\colon \text{$w\ne 0$, $w$ is an entry of $X^*Y_1$}\}\\
    b=\max\{|w|\colon \text{$w$ is an entry of $X^*Z_1$}\}
  \end{gather*}
  and let $\epsilon\in (0,\tfrac12 ab^{-1})$. Define
  \[W_\epsilon=Y+\epsilon Z\in M_{k,|t|-1}\ifieee\]and\[\else\qand\fi
  A=A_\epsilon = W_\epsilon^* W_\epsilon \in M_{|t|-1}^+.\] Note that $\rank A=\rank W_\epsilon \leq
  k$. Let us write $W_\epsilon = [W_1\ W_2\ \cdots\ W_n]$ and $A=[A_{i,j}]_{i,j\in
    [n]}$, with block sizes given by $s=(t_1-1,t_2,\dots,t_n)$. Note that 
  if $2\leq i,j\leq n$, then $W_i=X_i$ and $W_j=X_j$, hence
  $A_{i,j}=W_i^*W_j=X_i^*X_j=B_{i,j}\ne0\iff i\simeq j$. Let $i\in
  [n]$. We have
  \begin{equation}\label{eq:Ai1}
    A_{i,1}=X_i^*(Y_1+\epsilon Z_1)=X_i^*Y_1+\epsilon X_i^* Z_1.
  \end{equation}
  If $i\not\simeq 1$, then $X_i^*Y_1$ is a submatrix of $X_i^*X_1=0$,
  and $X_i^*Z_1$ is the matrix with every column equal to the first
  column of $X_i^*X_1=0$. Hence if $i\not\simeq1$, then
  $A_{i,1} = 0 = A_{1,i}$. Now suppose that $i\simeq 1$. Since $X_i^*Y_1$
  and $X_i^*Z_1$ are submatrices of~$X^*Y_1$ and $X^*Z_1$,
  respectively, by~(\ref{eq:Ai1}) and our choice of~$\epsilon$, if
  $X_i^*Y_1$ has any non-zero entry, then the corresponding entry of
  $A_{i,1}$ is also non-zero. On the other hand, if $X_i^*Y_1=0$, then
  since $i\simeq1$ yields $X_i^*X_1\ne0$, we must have $X_i^*Z_1\ne0$,
  hence $A_{i,1}\ne0$; since $A=A^*$, we also have $A_{1,i}\ne0$.  This
  shows that for any $\epsilon>0$, the matrix~$A=A_\epsilon$ satisfies
  \[A_{i,j}\ne0\iff i\simeq j,\quad\text{for any~$i,j\in [n]$}.\]
  Since~$W_\epsilon\to Y$ as~$\epsilon\to 0$, we see that $A_\epsilon$
  converges to the matrix~$B$ with the first row and column removed;
  in particular, the top left $(t_1-1)\times (t_1-1)$ block
  of~$A_\epsilon$ converges to~$I_{t_1-1}$. Hence by choosing
  $\epsilon\in (0,\tfrac12ab^{-1})$ sufficiently small, we may ensure
  that $A_{1,1}$ has rank~$t_1-1$, hence $A\in \F_s^+(G)$. By Proposition~\ref{prop:F+H+}, we have 
  $\mpsd[s](G) \leq \rank A\leq k = \mpsd[t](G)$.
 \end{proof}

\section{Quantum channels and operator system parameters}\label{sect:complexity}

We recall that 
an {\it operator subsystem of $M_n$} is a subspace $\cl S \subseteq M_n$ 
such that $I \in \cl S$ and $X \in \cl S \implies X^* \in \cl S$. In this paper we will sometimes refer to such a self-adjoint unital subspace $\S\subseteq M_n$ simply as an \emph{operator system}; we refer the reader to~\cite{paulsen_2002} for the general theory of operator systems and completely bounded maps.
A linear map $\Phi:M_n \to M_k$ is called a {\it quantum channel} 
if it is completely positive and trace-preserving. By theorems of Choi and Kraus, $\Phi$ is a quantum channel if and only if there exists $m\in \bN$ and matrices 
$A_1,\dots,A_m\in M_{k,n}$,  satisfying $\sum_{i=1}^m A_i^*A_i = I_{n}$, so that
\[\Phi(X) = \sum_{i=1}^m A_i X A_i^*, \  \ \ X\in M_{n}.\]

This realisation of $\Phi$ is called a {\it Choi-Kraus representation} 
and the matrices $A_i$ are called its {\it Kraus operators.}  
The Choi-Kraus representation is far from unique, but it was 
shown in \cite{dsw} that the subspace of $M_n$ spanned by the set
$\{ A_i^*A_j: 1 \le i,j \le m \}$ is independent of it.   Consequently, \cite{dsw} set
\[ \cl S_{\Phi} := \spn \left\{ A_i^*A_j: 1 \le i,j \le m \right\}, \]
where $\Phi(X) = \sum_{i=1}^m A_iXA_i^*$ 
is any Choi-Kraus representation of $\Phi$. 
This space, easily seen to be an operator system, is called the {\it non-commutative confusability graph of $\Phi$.}

Regarding operator subsystems of $M_n$ as non-commutative confusability graphs, 
we wish to define operator system analogues of the graph parameters considered in Section \ref{sec:graphs}.
Just as every graph is the confusability graph of some classical channel,
\cite{dsw} showed that the map $\Phi\mapsto \S_\Phi$ 
from quantum channels with domain $M_n$ to operator subsystems of $M_n$, is surjective.
We will need the following estimate on the dimension of the target Hilbert space.

\begin{proposition}%
\label{p_Phi}
  Let $n\in \bb{N}$. If $\S\subseteq M_n$ is an operator system, then
  there exists $k\in \bN$ and a quantum channel
  $\Phi\colon M_n\to M_k$ such that $\S = \S_{\Phi}$. In fact, if
  $m\in \bN$ is such that $\binom{m}{2}\ge \dim \S-1$, then we can take $k=mn$.
\end{proposition}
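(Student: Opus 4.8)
The plan is to show that any operator system $\S\subseteq M_n$ of dimension $d$ can be realised as $\S_\Phi$ for a channel $\Phi\colon M_n\to M_k$ with $k=mn$, whenever $\binom m2\ge d-1$. I would start by writing $\S$ in terms of a convenient spanning set. Since $\S$ is self-adjoint and unital, $I_n\in\S$, and I can choose a basis $I_n, S_2,\dots,S_d$ of $\S$; by replacing each $S_j$ with $\tfrac12(S_j+S_j^*)$ and $\tfrac1{2i}(S_j-S_j^*)$ if necessary, I may assume each $S_j$ is self-adjoint. The idea is to manufacture Kraus operators $A_1,\dots,A_m\in M_{k,n}$ so that the span of $\{A_i^*A_j\}$ is exactly $\S$, and so that $\sum_i A_i^*A_i=I_n$. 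The natural construction: take $A_i$ to be (a scalar multiple of) a block column of $n\times n$ blocks, i.e. $A_i\in M_{mn,n}$, and arrange that $A_i^*A_j$ ($i<j$) and the diagonal terms $A_i^*A_i$ together span $\S$. There are $\binom m2$ off-diagonal pairs $\{i,j\}$ with $i<j$, plus the diagonal relations, which is why the hypothesis $\binom m2\ge d-1$ appears — one degree of freedom (the $I_n$ direction) is pinned down by the trace-preserving constraint.

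Concretely, I would index the Kraus operators by $i\in[m]$ and, after fixing a bijection between a size-$(d-1)$ subset of the pairs $\{i,j\}$, $i<j$, and the indices $\{2,\dots,d\}$, set the corresponding off-diagonal block of $A_i^*A_j$ to encode $S_\ell$. Writing each $A_i=\bigl[B_{i,1}^{\,*}\ \cdots\ B_{i,m}^{\,*}\bigr]^*$ as a column of $n\times n$ blocks $B_{i,p}\in M_n$, we get $A_i^*A_j=\sum_p B_{i,p}^*B_{j,p}$. A clean way to decouple the pairs is to let $B_{i,p}$ be supported only when $p$ is one of the two indices making up a relevant pair; e.g. use the "edge–vertex" incidence structure of the complete graph $K_m$ so that each pair $\{i,j\}$ contributes through a distinct block coordinate $p$, and $A_i^*A_j$ collapses to a single term $B_{i,p}^*B_{j,p}$ which I can prescribe to be (a multiple of) $T_{\{i,j\}}$ for a suitable self-adjoint $T_{\{i,j\}}\in\S$. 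Choosing the $T$'s to run over $I_n$ together with $S_2,\dots,S_d$ (each appearing at least once) guarantees $\spn\{A_i^*A_j\}\supseteq\S$, and keeping every $B_{i,p}$ a scalar multiple of a matrix in $\S$ (or of $I_n$) guarantees the reverse inclusion $\spn\{A_i^*A_j\}\subseteq\S$, since $\S$ is an operator system and hence closed under the relevant products — actually one must be a little careful here, as $\S$ need not be an algebra, so I would instead arrange each individual $B_{i,p}$ to be a scalar multiple of $I_n$, so that $B_{i,p}^*B_{j,p}$ is a scalar multiple of a single prescribed $S_\ell$ (which does lie in $\S$), and the products $A_i^*A_j$ become scalar combinations of the $S_\ell$'s, staying inside $\S$.

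The main obstacle is the trace-preserving normalisation $\sum_{i=1}^m A_i^*A_i=I_n$: the construction above will naturally give $\sum_i A_i^*A_i$ equal to $D\otimes$(something) or a fixed positive diagonal-in-blocks matrix $C\in M_n^+$ that is a scalar multiple of $I_n$ (if all blocks are scalars times $I_n$, then $A_i^*A_i=\sum_p|c_{i,p}|^2 I_n$ and the sum is $\lambda I_n$ for $\lambda=\sum_{i,p}|c_{i,p}|^2$). So rescaling all Kraus operators by $\lambda^{-1/2}$ fixes the normalisation without changing the span, which is scaling-invariant. One should double-check that after rescaling no $A_i$ has become zero (it has not, provided each Kraus operator genuinely participates in some pair, which is arranged) and that $m\ge 2$ so that the off-diagonal pairs exist; the case $\dim\S=1$, i.e. $\S=\CC I_n$, is handled separately (or absorbed) by taking $\Phi$ the identity-type channel onto $M_n$ with $m=1$, noting $\binom 12 = 0 \ge 0$. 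Finally, the block structure gives $A_i\in M_{mn,n}$, i.e. $k=mn$, as claimed; I would close by verifying $\dim\S_\Phi=d$ by exhibiting the $d$ spanning elements explicitly and noting linear independence follows from that of $I_n,S_2,\dots,S_d$.
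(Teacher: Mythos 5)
There is a genuine gap at the heart of your construction. You want each off-diagonal product $A_i^*A_j$ to collapse to a single block product $B_{i,p}^*B_{j,p}$ equal to a prescribed basis element $S_\ell$, while simultaneously keeping every block $B_{i,p}$ a scalar multiple of $I_n$ so that each diagonal term $A_i^*A_i$ is a scalar multiple of $I_n$ and the trace-preserving normalisation can be fixed by a global rescaling. These two requirements are incompatible: if $B_{i,p}=cI_n$ and $B_{j,p}=c'I_n$, then $B_{i,p}^*B_{j,p}=\bar c c' I_n$ is scalar, so the channel you build has $\S_\Phi=\bC I_n$ rather than $\S$. If instead you let one of the two factors be $S_\ell$ itself, then the corresponding diagonal term $A_j^*A_j$ contains $S_\ell^*S_\ell$, which need not lie in $\S$; this destroys both the claim that $\sum_i A_i^*A_i$ is a multiple of $I_n$ and the inclusion $\S_\Phi\subseteq\S$. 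There is also a bookkeeping problem: for $\sum_p B_{i,p}^*B_{j,p}$ to collapse to a single term via the edge--vertex incidence of $K_m$, the block coordinate $p$ must range over edges, giving a codomain of dimension $\binom{m}{2}n$ (or $(d-1)n$), not $mn$.

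The missing idea is to prescribe the Gram matrix $[A_i^*A_j]_{i,j}$ directly rather than the Kraus operators themselves. Form the hermitian block matrix $H=[H_{ij}]\in M_m(M_n)$ with $H_{ii}=I_n$ and with the blocks $H_{ij}$ for $i<j$ running over a basis $S_1,\dots,S_{d-1}$ of a complement of $\bC I_n$ in $\S$ (this is where $\binom{m}{2}\ge d-1$ enters; the blocks $H_{ji}=H_{ij}^*$ remain in $\S$ by self-adjointness). For $\epsilon>0$ small enough, $X=\tfrac1m(I_{mn}+\epsilon H)$ is positive semidefinite, hence $X=C^*C$ for some $C\in M_m(M_n)$, and the $mn\times n$ block columns $C_1,\dots,C_m$ of $C$ satisfy $C_i^*C_j=X_{ij}$, whose span is exactly $\S$; moreover $\sum_i C_i^*C_i$ is a scalar multiple of $I_n$, which a final rescaling turns into $I_n$ without changing the span. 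Positivity of $X$ guarantees that the factorisation exists without your having to control the individual blocks of $C$ at all---which is precisely what the blockwise construction cannot achieve.
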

\begin{proof}
Let~$d=\dim \S$ and let $I_n,S_1,\dots,S_{d-1}$ be
  a basis of~$\S$. Suppose that $\binom{m}{2}\ge \dim \S-1$. Then we can form a hermitian
  $m\times m$ block matrix $H=[H_{ij}]_{i,j\in [m]}\in M_m(M_n)$ so that
  \[\text{$H_{ii}=I_n$ for $i\in [m]$}\ifieee\]and\[\else \text{ and }\fi
    \{S_1,\dots,S_{d-1}\}=\{H_{ij}\colon 1\leq i<j\leq m\}.\] For
  sufficiently small $\epsilon>0$, the matrix $X=\frac1{m}(I_{mn}+\epsilon H)$ is
  positive semi-definite, hence $X=C^*C$ for some $C\in M_m(M_n)$, and
  the block entries of~$X$ span~$\S$. The $mn\times n$ block columns
  of~$C$ are then Kraus operators for a quantum
  channel~$\Phi\colon M_n\to M_{mn}$ for which $\S_\Phi$ is spanned
  by the entries of~$X$, so $\S_\Phi=\S$.
\end{proof}

We now define parameters of operator systems which, as we will shortly see, generalise the graph parameters above.
Let $\S\subseteq M_n$ be an operator system. As usual, we write 
$$\S^\perp=\{A\in M_n\colon \tr(A^*S)=0\text{ for all $S\in \S$}\}.$$

\begin{itemize}  
\item[(a)] 
Let $\cl S\subseteq M_n$ be an operator system.
Recall \cite{dsw} that an $\S$-independent set of size $m$ is an $m$-tuple $x=(x_1,\dots,x_m)$ 
with each~$x_i$ a non-zero vector in $\bC^n$, so that $x_px_q^*\in \S^\perp$ whenever $p,q\in [m]$ with $p\ne q$. 
The independence number~$\alpha(\S)$ is then defined by letting
\[\hspace*{-.5cm} \alpha(\S)=\max\{m\in \bN\colon \text{\ifieee$\exists$\else there is\fi~an $\S$-independent set of size~$m$}\}.\]
\item[(b)] 
We define the 
\emph{quantum complexity} $\gamma(\S)$ by letting
\ifieee\begin{multline*}
  \gamma(\S) = \min\{k\in \bN \colon \S_\Phi=\S \\\text{ for some quantum channel $\Phi \colon M_n\to M_k$}\}
\end{multline*}
\else
\[
  \gamma(\S) = \min\{k\in \bN \colon \S_\Phi=\S \text{ for some quantum channel $\Phi \colon M_n\to M_k$}\}
\]
\fi
and the
\emph{quantum subcomplexity} $\beta(\S)$ by letting
\ifieee
\begin{align*}
  \beta(\S) &= \min\{k\in \bN \colon \S_\Phi\subseteq\S\\&\hphantom{={}}\text{ for some quantum channel $\Phi \colon M_n\to M_k$}\} 
    \\&= \min \{ \gamma(\cl T): \cl T \subseteq \cl S \text{ is an operator subsystem}\}.
  \end{align*}
\else
\begin{align*}
  \beta(\S) &= \min\{k\in \bN \colon \S_\Phi\subseteq\S\text{ for some quantum channel $\Phi \colon M_n\to M_k$}\} 
    \\&= \min \{ \gamma(\cl T): \cl T \subseteq \cl S \text{ is an operator subsystem}\}.
  \end{align*}
\fi
  
\item[(c)] 
A quantum channel~$\Phi$ which has a set of Kraus operators each of which is of the form $AD$ for some entrywise non-negative matrix $A$ and an invertible diagonal matrix~$D$ will be said to be a \emph{non-cancelling}. We define
  \begin{multline*} \cc(\S) = 
  \inf\{k\in \bN\colon \text{$\S_\Phi=\S$}\ifieee\text{ for some}\\\else\\\text{for some }\fi\text{non-cancelling quantum channel $\Phi \colon M_n\to M_k$}\}.
  \end{multline*}
\end{itemize}

\begin{corollary}\label{c_bong} 
Let $\cl S\subseteq M_n$ be an operator system. Then $\gamma(\cl S)\leq 2n^2$.
\end{corollary}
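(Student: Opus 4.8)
The plan is to apply Proposition~\ref{p_Phi} with a suitable choice of $m$ and optimise the resulting bound $k = mn$. Given an operator system $\S \subseteq M_n$, we always have $\dim \S \le n^2$, so we only need $m \in \bN$ with $\binom{m}{2} \ge n^2 - 1$; Proposition~\ref{p_Phi} then produces a quantum channel $\Phi \colon M_n \to M_{mn}$ with $\S_\Phi = \S$, whence $\gamma(\S) \le mn$. The goal is therefore to pick the smallest such $m$ and check that $mn \le 2n^2$, i.e.\ that some $m \le 2n$ satisfies $\binom{m}{2} \ge n^2 - 1$.

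First I would verify the combinatorial inequality. Taking $m = 2n$ (for $n \ge 2$) gives $\binom{2n}{2} = n(2n-1) = 2n^2 - n \ge n^2 - 1$ precisely when $n^2 - n + 1 \ge 0$, which holds for all $n \ge 1$. Hence for $n \ge 2$ we may take $m = 2n$, and Proposition~\ref{p_Phi} yields a channel onto $M_{2n^2}$, so $\gamma(\S) \le 2n^2$. The case $n = 1$ is trivial since then $\S = M_1 = \CC$ and $\gamma(\S) = 1 \le 2$; alternatively one notes $\binom{2}{2} = 1 \ge 1 - 1 = 0$, so $m = 2$ works and gives the bound $2 = 2n^2$.

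The argument is essentially immediate once Proposition~\ref{p_Phi} is in hand, so there is no real obstacle — the only point requiring a moment's care is confirming that $m = 2n$ (rather than something larger) suffices, i.e.\ that $\binom{2n}{2} \ge n^2 - 1$, which as noted above reduces to the trivially true inequality $n^2 - n + 1 \ge 0$. (In fact one could do slightly better by solving $\binom m2 \ge n^2-1$ exactly, giving roughly $m \approx \sqrt2\, n$ and $\gamma(\S) \lesssim \sqrt2\, n^{3/2}\cdot\sqrt n$; but the clean bound $2n^2$ is all that is claimed, so I would not pursue the sharper estimate here.) I would write the proof as a two-line application: bound $\dim\S \le n^2$, invoke Proposition~\ref{p_Phi} with $m = 2n$ after checking $\binom{2n}{2} \ge n^2 - 1$, and conclude $\gamma(\S) \le 2n^2$.
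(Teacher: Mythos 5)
Your proof is correct and is exactly the paper's argument: bound $\dim\S\le n^2$ and invoke Proposition~\ref{p_Phi} with $m=2n$, noting $\binom{2n}{2}=2n^2-n\ge n^2-1$. The paper states this in one line; your verification of the binomial inequality is the only (trivial) detail it leaves implicit.
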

\begin{proof}
Since $\dim\cl S\leq n^2$, we can take $m = 2n$ in Proposition \ref{p_Phi}.
\end{proof}

We will refer to $\gamma(\Phi)$ as the {\it quantum complexity of $\Phi$} and $\beta(\Phi)$ 
as the {\it quantum subcomplexity of $\Phi$.}
Given a channel $\Phi$, we set $\pi(\Phi) = \pi(\cl S_{\Phi})$ for $\pi \in \{ \beta, \gamma, \inter\}$. 

\begin{remark}\label{remark:pi=1}
{\bf (i) }
A set of quantum states can be perfectly distinguished by a measurement system if and only if they are orthogonal.  
Consequently, \cite{dsw} defined the {\it one-shot zero-error capacity $\alpha(\Phi)$} of a quantum channel 
$\Phi:M_n \to M_k$ to be the maximum cardinality of a set $\{ v_1, ..., v_p \} \subseteq \bb C^n$  orthogonal unit vectors, 
such that 
\[ \tr(\Phi(v_iv_i^*) \Phi(v_jv_j^*)) =0, \ \ \ i \ne j.\]
It was shown in~\cite{dsw} (see also~\cite{p_notes}) that $\alpha(\Phi) = \alpha(\cl S_{\Phi})$. 

\smallskip

\noindent {\bf (ii) } 
Let $\cl S$ be an operator system. 
The \emph{quantum chromatic number} $\chi_q(\cl S^{\perp})$ of 
the orthogonal complement $\cl S^{\perp}$ of $\cl S$ was introduced by D. Stahlke in \cite{stahlke}.
It is straightforward that $\beta(\cl S) = \chi_q(\cl S^{\perp})$.

\smallskip

\noindent {\bf (iii) }  For an operator system~$\S\subseteq M_n$ and
  $\pi\in\{\beta,\gamma,\cc\}$, we have $\pi(\S)=1 \iff \S=M_n$.
  Indeed, the trace $\tr\colon M_n\to \bC$ is a
  non-cancelling quantum channel since it has the entry-wise non-negative Kraus operators
  $e_1^*,\dots,e_n^*$
  (where $e_i^*$ is the functional corresponding to the vector $e_i$), 
  so  $1\leq \pi(M_n)\leq \cc(M_n)=1$ and hence
  $\pi(M_n)=1$.
  Conversely, $\pi(\S)=1$ implies that $\beta(\S)=1$; the trace
  is the only scalar-valued quantum channel on~$M_n$, so
  $M_n = \S_{\tr}\subseteq \S\subseteq M_n$, that is, $\S = M_n$.

  Note that, in contrast with Remark~\ref{r_123} (iii), it is not true
  that $M_n$ is the only operator system $\cl S$ with $\alpha(\S)=1$; see Proposition~\ref{prop:n=2,gamma=3}.
  
  \smallskip

\noindent  {\bf (iv) } We claim that \[\alpha(\bC I_n)=\beta(\bC I_n)=\gamma(\bb{C}I_n)=\inter(\bC I_n) = n.\] Indeed, one sees immediately that $\alpha(\bC I_n)\geq n$ by considering the $\bC I_n$-independent set $(e_1,\dots,e_n)$, and since the identity channel $M_n\to M_n$ is non-cancelling, we have that $\inter(\bb{C}I_n) \leq n$, so an appeal to Theorem~\ref{thm:inncc} below establishes the claim.
\smallskip

\noindent {\bf (v) } Let $\cl S\subseteq M_n$. Using (iv), we have
\[\beta(\cl S) = \min\{\gamma(\cl T) : \cl T\subseteq \cl S\} \leq \gamma(\bb{C}I_n) = n.\]
On the other hand, $\gamma(\S)$ may exceed $n$, even for $n=2$ (see Proposition~\ref{prop:n=2,gamma=3}).
\smallskip

\noindent {\bf (vi) }
There exist operator systems $\S\subseteq M_n$ with $\inter(\S)=\infty$, so the infimum in the definition of $\inter(\S)$ cannot be replaced with a minimum. For example, it is not difficult to see that this is the case for the two-dimensional operator system $\S\subseteq M_4$ spanned by the identity and $H=\begin{sbmatrix}0&X\\X&0\end{sbmatrix}$ where $X=\begin{sbmatrix}1&1\\1&-1\end{sbmatrix}$.
\end{remark}

\begin{theorem}\label{thm:inncc} 
Let $\cl S\subseteq M_n$ be an operator system. Then
$$\alpha(\cl S)\leq \beta(\cl S)\leq \gamma(\cl S)\leq \cc(\S).$$
\end{theorem}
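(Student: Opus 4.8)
The plan is to establish the three inequalities $\alpha(\S)\le\beta(\S)$, $\beta(\S)\le\gamma(\S)$, $\gamma(\S)\le\cc(\S)$ separately, mirroring the graph case from Remark~\ref{r_21} but now working with quantum channels and the non-commutative confusability graph $\S_\Phi$. The two rightmost inequalities should be essentially formal. For $\gamma(\S)\le\cc(\S)$: any non-cancelling quantum channel is in particular a quantum channel, so if $\S_\Phi=\S$ for some non-cancelling $\Phi\colon M_n\to M_k$, then a fortiori $\S_\Phi=\S$ for a quantum channel, giving $\gamma(\S)\le k$; taking the infimum yields $\gamma(\S)\le\cc(\S)$ (with the usual convention when $\cc(\S)=\infty$). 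For $\beta(\S)\le\gamma(\S)$: this is immediate from the first displayed formula for $\beta(\S)$, since the condition $\S_\Phi=\S$ is stronger than $\S_\Phi\subseteq\S$. (Alternatively, use the second characterisation $\beta(\S)=\min\{\gamma(\cl T)\colon \cl T\subseteq\S\}$ and take $\cl T=\S$.)

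The substantive inequality is $\alpha(\S)\le\beta(\S)$. First I would unwind $\beta(\S)$: pick a quantum channel $\Phi\colon M_n\to M_k$ with $\S_\Phi\subseteq\S$ and $k=\beta(\S)$, with Choi--Kraus operators $A_1,\dots,A_m\in M_{k,n}$ satisfying $\sum_i A_i^*A_i=I_n$, so $\S_\Phi=\spn\{A_i^*A_j\}$. Now take an $\S$-independent set $x=(x_1,\dots,x_p)$ of maximal size $p=\alpha(\S)$: the $x_r$ are non-zero vectors in $\bC^n$ with $x_rx_s^*\in\S^\perp$ for $r\ne s$. The key observation is that $x_rx_s^*\in\S^\perp\subseteq\S_\Phi^\perp$ forces $\tr\big((A_i^*A_j)^*(x_rx_s^*)\big)=0$ for all $i,j$, i.e. $\langle A_jx_s, A_ix_r\rangle$-type quantities vanish; summing over $i=j$ gives $\sum_i \|$ \ldots. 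Concretely, $\Phi(x_rx_r^*)=\sum_i (A_ix_r)(A_ix_r)^*$, and orthogonality of the output states $\Phi(x_rx_r^*)$ for $r\ne s$ (which is exactly $\tr(\Phi(x_rx_r^*)\Phi(x_sx_s^*))=0$) would follow from $\S$-independence via the identity $\tr(\Phi(v v^*)\Phi(ww^*))=\sum_{i,j}|\langle A_jw,A_iv\rangle|^2$ together with the fact that each $\langle A_jw, A_iv\rangle = \tr\big(A_i^*A_j\,(ww^*)^*{}\big)\cdot(\text{up to conjugates})$ vanishes when $vw^*\in\S^\perp$. Thus the vectors $y_r$ obtained from the $A_ix_r$ (assembled appropriately in $\bC^k$ or $\bC^{km}$) give an $\S_\Phi$-type independent configuration witnessing $\alpha(\Phi)\ge p$; but by Remark~\ref{remark:pi=1}(i) one has $\alpha(\Phi)=\alpha(\S_\Phi)$, and since $\S_\Phi\subseteq\S$ it is immediate from the definition of the independence number that $\alpha(\S)\le\alpha(\S_\Phi)$. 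Wait --- the inclusion goes the wrong way for that last step, so the cleaner route is: $\S_\Phi\subseteq\S$ gives $\S^\perp\subseteq\S_\Phi^\perp$, hence every $\S$-independent set is $\S_\Phi$-independent, hence $\alpha(\S)\le\alpha(\S_\Phi)$. Then one needs $\alpha(\S_\Phi)\le k=\beta(\S)$; this is the genuine content, and it is the non-commutative Shannon bound: orthogonal output states in $M_k$ live in orthogonal subspaces of $\bC^k$, so at most $k$ of them can be pairwise orthogonal, giving $\alpha(\S_\Phi)=\alpha(\Phi)\le\dim\bC^k=k$. Chaining these, $\alpha(\S)\le\alpha(\S_\Phi)\le k=\beta(\S)$.

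The main obstacle is getting the direction of the inclusions and the role of $\S$-independence exactly right: one must be careful that $\S_\Phi\subseteq\S$ is what makes $\S$-independent sets $\S_\Phi$-independent (via $\S^\perp\subseteq\S_\Phi^\perp$), and then invoke both the identity $\alpha(\Phi)=\alpha(\S_\Phi)$ from~\cite{dsw} (Remark~\ref{remark:pi=1}(i)) and the elementary bound $\alpha(\Phi)\le k$ coming from pairwise-orthogonal output density matrices spanning pairwise-orthogonal subspaces of $\bC^k$. Everything else is bookkeeping. I would also note in passing that this reproves \eqref{eq_abgi} when $\S=\S_G$ is a graph operator system, once Theorem~\ref{theorem:gen} identifies the operator-system parameters with the graph parameters.
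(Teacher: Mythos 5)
Your proposal is correct and follows essentially the same route as the paper: the trivial inequalities $\beta\le\gamma\le\cc$ are handled identically, and for $\alpha(\S)\le\beta(\S)$ your key computation --- that $x_px_q^*\in\S^\perp\subseteq\S_\Phi^\perp$ forces $\langle A_ix_p,A_jx_q\rangle=0$, so the output states $\Phi(x_px_px^*)$ have pairwise orthogonal non-zero ranges in $\bC^k$ --- is exactly the paper's argument with the range projections $E_p$ onto $\spn\{A_ix_p\}$. The only cosmetic difference is that you route part of the argument through the cited identity $\alpha(\Phi)=\alpha(\S_\Phi)$, whereas the paper argues directly with the projections $E_p$ and so stays self-contained.
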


\begin{proof}
Suppose that $\Phi : M_n\to M_k$ is a quantum channel
with $\cl S_{\Phi}\subseteq \cl S$; let $A_i \in M_{k,n}$, $i = 1,\dots,d$, be its Kraus operators.
Let $(x_p)_{p=1}^m$ be an $\cl S$-independent set of size~$m$.
For each $p\in[m]$, let $E_p$ be the projection in $M_k$ onto the span of $\{A_ix_p : i = 1,\dots,d\}$.
Since
$$\sum_{i=1}^d \|A_ix_p\|^2 = \left\langle \left(\sum_{i=1}^d A_i^* A_i\right) x_p,x_p\right\rangle = 1,$$
we have that $E_p\neq 0$ for each $p$.
On the other hand, since $A_j^* A_i\in \cl S$ for all $i,j = 1,\dots,d$ and  $(x_p)_{p=1}^m$
is $\S$-independent, we have that
$$\langle A_ix_p, A_jx_q\rangle = \langle A_j^*A_ix_p,x_q\rangle = 0, \ \ \  p\neq q, \ i,j = 1,\dots,d.$$
Thus, $E_1,\dots,E_m$ are pairwise orthogonal projections in~$M_k$; it follows that 
$m\leq k$ and hence $\alpha(\cl S)\leq \beta(\cl S)$.

The inequalities $\beta(\cl S)\leq \gamma(\cl S)\leq \cc(\S)$ hold trivially.
\end{proof}

In the next proposition, we collect some properties of the operator system parameters introduced above.

\begin{proposition}\label{prop:elem}
  Let~$\S\subseteq M_n$ and $\S_i\subseteq M_{n_i}$, $i=1,2$ be operator systems.
  \begin{itemize}
  \item[(i)] If $\pi\in\{\alpha,\beta,\gamma\}$ and~$U\in M_n$ is unitary, then $\pi(U^*S U)=\pi(\S)$;
  \item[(ii)] If $\pi\in\{\alpha,\beta,\gamma\}$ and~$P\in M_n$ is a projection of rank~$r$, then, viewing $P\S
    P$ as an operator subsystem of~$M_r$, we have $\pi(P\S P)\leq \pi(\S)$;
  \item[(iii)] If $\pi\in\{\beta,\gamma\}$, $n=n_1n_2$ and $\S=\S_1\otimes \S_2$, 
  then \[\max \{ \pi(\S_1), \pi(\S_2) \} \le \pi(\S)\leq\pi(\S_1)\pi(\S_2);\] 
  \item[(iv)] If $\pi\in \{\beta,\gamma,\inter\}$, $n=n_1=n_2$ and $\S=\spn(\S_1\cup \S_2)$, 
  then 
  $$\pi(\S)\leq \pi(\S_1)+\pi(\S_2).$$
  \item[(v)] If $\pi\in \{\beta,\gamma\}$, then $\pi(\S_1\oplus \S_2) = \pi(\S_1)+\pi(\S_2)$.
  \end{itemize}
\end{proposition}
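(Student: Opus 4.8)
The plan is to treat each item in turn, since they are largely independent, and for most of them the key is to chase how a Choi--Kraus representation transforms under the operation in question.

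For (i), if $\Phi\colon M_n\to M_k$ has Kraus operators $A_i$, then the map $\Psi(X)=\Phi(UXU^*)$ is again a quantum channel with Kraus operators $A_iU$, and $\S_\Psi=U^*\S_\Phi U$; since $U\mapsto U^*(\cdot)U$ is a bijection on operator subsystems of $M_n$ preserving inclusion, this immediately gives $\gamma(U^*\S U)=\gamma(\S)$ and $\beta(U^*\S U)=\beta(\S)$. For $\alpha$, one checks that $x\mapsto U^*x$ carries $\S$-independent sets to $U^*\S U$-independent sets bijectively, using $(U^*x)(U^*x)^* = U^*(xx^*)U$ and the fact that $U^*\S^\perp U = (U^*\S U)^\perp$. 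For (ii), given a channel $\Phi\colon M_n\to M_k$ with $\S_\Phi=\S$ (resp.\ $\S_\Phi\subseteq\S$) and Kraus operators $A_i$, I would consider the Kraus operators $A_iP\in M_{k,r}$ viewed as maps on the range of $P$; the subtlety is that $\sum_i (A_iP)^*(A_iP) = P(\sum_i A_i^*A_i)P = P$, which acts as the identity on $\ran P$, so this is a genuine channel $M_r\to M_k$, and its confusability graph is $\spn\{PA_i^*A_jP\} = P\S_\Phi P\subseteq P\S P$. Hence $\gamma(P\S P)\le\gamma(\S)$ and $\beta(P\S P)\le\beta(\S)$. For $\alpha$, an $P\S P$-independent family of vectors in $\ran P$ is in particular an $\S$-independent family in $\bC^n$, since $x_px_q^* \in (P\S P)^\perp$ within $\ran P$ forces $\tr((PSP)x_px_q^*)=0$ for all $S\in\S$, and $Px_p=x_p$.

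For (iii), the lower bound $\max\{\pi(\S_1),\pi(\S_2)\}\le\pi(\S_1\otimes\S_2)$ follows from (ii): fixing unit vectors picks out a compression $P(\S_1\otimes\S_2)P$ isomorphic to $\S_1$ (or $\S_2$) via a rank-$n_1$ (or rank-$n_2$) projection of the appropriate tensor form. For the upper bound, if $\Phi_i\colon M_{n_i}\to M_{k_i}$ realises $\S_i$ (or a subsystem), then $\Phi_1\otimes\Phi_2\colon M_{n_1n_2}\to M_{k_1k_2}$ is a quantum channel whose Kraus operators are the $A^{(1)}_i\otimes A^{(2)}_j$, and $\S_{\Phi_1\otimes\Phi_2} = \S_{\Phi_1}\otimes\S_{\Phi_2}$; this handles both the $\gamma$ and $\beta$ cases. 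For (iv), suppose $\Phi_i\colon M_n\to M_{k_i}$ are channels with $\S_{\Phi_i}=\S_i$ (resp.\ $\subseteq\S_i$, resp.\ non-cancelling realising $\S_i$), with Kraus operators $\{A^{(i)}_j\}$. Then $\Psi(X) = \tfrac12\Phi_1(X)\oplus\tfrac12\Phi_2(X)$, i.e.\ the channel $M_n\to M_{k_1+k_2}$ with Kraus operators $\tfrac1{\sqrt2}A^{(1)}_j\oplus 0$ and $0\oplus\tfrac1{\sqrt2}A^{(2)}_j$, satisfies $\sum (\text{Kraus})^*(\text{Kraus}) = \tfrac12 I + \tfrac12 I = I$, and $\S_\Psi = \spn(\S_{\Phi_1}\cup\S_{\Phi_2})$, giving the bound for $\gamma$ and $\beta$; for $\inter$, one notes that a direct sum of non-cancelling Kraus operators (scaled by a positive constant) is still of the form $AD$ with $A$ entrywise non-negative and $D$ invertible diagonal. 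Finally (v): the inequality $\le$ is the $n_1=n_2$ argument of (iv) adapted to the block-diagonal embedding (now $\S_1\oplus\S_2\subseteq M_{n_1+n_2}$, with $\Phi_i$ first compressed to act on the $i$th summand and then amalgamated), giving $\pi(\S_1\oplus\S_2)\le\pi(\S_1)+\pi(\S_2)$; the reverse inequality $\pi(\S_1\oplus\S_2)\ge\pi(\S_1)+\pi(\S_2)$ is the main obstacle. For that, given a channel $\Phi\colon M_{n_1+n_2}\to M_k$ realising $\S_1\oplus\S_2$ (or a subsystem thereof), I would compress by the two coordinate projections $P_1,P_2$ to get channels realising subsystems of $\S_1$ and $\S_2$; the point is to show the images $P_1(\text{stuff})$ and $P_2(\text{stuff})$ of the Kraus operators have ranges that force the target dimension $k$ to split as $k_1+k_2$ with $k_i\ge\pi(\S_i)$. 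Concretely, since $\S_1\oplus\S_2$ has zero off-diagonal blocks, $\sum_i A_iP_1(\cdot)P_2A_i^* $-type terms vanish, which should force the ranges $\ran(A_iP_1)$ and $\ran(A_jP_2)$ to be mutually orthogonal in $\bC^k$; then $\Phi$ restricted to each summand is a channel into a genuinely $k_i$-dimensional space with the right confusability graph, and $k\ge k_1+k_2\ge\gamma(\S_1)+\gamma(\S_2)$ (resp.\ $\beta$).

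The routine direction and items (i)--(iv) are bookkeeping with Choi--Kraus representations; the step I expect to require the most care is the lower bound in (v), namely establishing the orthogonal-splitting of the ambient Hilbert space $\bC^k$ from the vanishing of the off-diagonal blocks of $\S_1\oplus\S_2$, and arguing that the two halves separately carry enough dimension.
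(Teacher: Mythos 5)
Your proposal is correct and follows essentially the same route as the paper in every item: Kraus operators $A_iU$ for (i), $A_iP$ for (ii), tensor products of Kraus operators plus a rank-one compression for (iii), the $\tfrac1{\sqrt2}$-scaled block direct sum for (iv), and for the lower bound in (v) the orthogonal splitting of $\bC^k$ coming from the vanishing off-diagonal blocks. The one point to tighten is your justification of that splitting: the orthogonality of the ranges holds because each individual operator $A_j^*A_i$ lies in $\S_\Phi\subseteq\S_1\oplus\S_2$, so each off-diagonal block $P_1A_j^*A_iP_2$ vanishes separately (the vanishing of a sum of such terms, as your phrasing suggests, would not by itself suffice).
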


\begin{proof} The proofs for $\pi=\alpha$ are easy and are left to the reader. We give the proofs for $\pi=\gamma$; the other proofs follow identical patterns.

(i) If~$\{A_p\}_{p=1}^m$ are Kraus operators in $M_{k,n}$ for which
    $\spn\{A_p^*A_q\}_{p,q=1}^m =\S$, then $\{A_pU\}_{p=1}^m $ are Kraus
    operators in $M_{k,n}$ such that
\ifieee
\begin{align*}
  \spn\{(A_pU)^*A_qU\}_{p,q=1}^m &= U^*\spn\{A_p^*A_q\}_{p,q=1}^m U \\&= U^*\S U,
\end{align*}
\else
\begin{align*}
  \spn\{(A_pU)^*A_qU\}_{p,q=1}^m &= U^*\spn\{A_p^*A_q\}_{p,q=1}^m U = U^*\S U,
\end{align*}
\fi
so
    $\gamma(U^*\S U)\leq \gamma(\S)$; the reverse inequality
    follows by symmetry.
    
(ii) If~$\{A_p\}_{p=1}^m$ are Kraus operators in $M_{k,n}$ for which
    $\spn\{A_p^*A_q\}_{p,q=1}^m =\S$, then after identifying the range of~$P$
    with $\bC^r$, we see that $\{A_pP\}_{p=1}^m$ are Kraus
    operators in $M_{k,r}$ with 
    $$\spn\{(A_pP)^*A_qP\}_{p=1}^m = P\spn\{A_p^*A_q\}_{p,q=1}^m P = P\S P,$$ so
    $\gamma(P\S P)\leq \gamma(\S)$.
    
(iii) Suppose that
    $\{A_{p,i}\colon p\in [m_i]\}\subseteq M_{k_i,n_i}$ is a family of Kraus operators for $i=1,2$, so
    that $\S_i = \spn\{A_{p,i}^*A_{q,i}\colon p,q\in [m_i]\}$, $i = 1,2$. 
Set $B_{p,r} := A_{p,1}\otimes A_{r,2}$; then $\{B_{p,r} \colon p\in [m_1],\,r\in [m_2]\}$ is a family of Kraus operators in 
$M_{k_1k_2,n}$ with $\S = \spn \{B_{p,r}^*B_{q,s}\colon p,q\in [m_1],\,r,s\in [m_2]\}$.
It follows that $\gamma(\cl S) \leq \gamma(\cl S_1) \gamma(\cl S_2)$. 

If we set $P_1 = I_{n_1} \otimes Q$ where $Q \in M_{n_2}$ is a rank one projection, then we have that $P_1(\S_1 \otimes \S_2) P_1 = \S_1 \otimes \bb C \equiv \S_1$. Hence by (ii), $\pi(\S_1) \le \pi(\S_1 \otimes \S_2)$ and the lower bound follows.

(iv) Let~$\{A_{p,i}\colon p\in [m_i]\}$ be a family of Kraus operators in $M_{k_i,n}$ with
    $\S_i = \spn\left\{A_{p,i}^*A_{q,i}\colon p,q\in [m_i]\right\}$, $i=1,2$.
Set $B_{p,1}=\left[\begin{smallmatrix}A_{p,1}\\0\end{smallmatrix}\right]$ and $B_{p,2}=\left[\begin{smallmatrix}0\\A_{p,2}\end{smallmatrix}\right]$,
viewed as elements of $M_{k_1+k_2,n}$. 
Then $\left\{\tfrac1{\sqrt2}B_{p,i}\colon i=1,2,\,p\in [m_i]\right\}$ is a family of Kraus operators with
$$\spn\{B_{p,i}^*B_{q,j}\}_{p,q,i,j}=\spn\{B_{p,i}^*B_{q,i}\}_{p,q,i}=\S,$$
so $\gamma(\S)\leq k_1+k_2$. Hence, $\gamma(\S)\leq \gamma(\S_1)+\gamma(\S_2)$.

(v) If $\S=\S_1\oplus \S_2$, then $n=n_1+n_2$. Suppose that $\gamma(\S)=k$, so that there exists a family  
$\{C_p=[A_p\;B_p]\colon p\in [m]\}$ of $k\times n$-Kraus operators, 
where $A_p\in M_{k,n_1}$ and $B_p\in M_{k,n_2}$, with $\S = \spn\{C_p^*C_q\}_{p,q=1}^m$. 
Since $C_p^*C_q=
\begin{sbmatrix}
  A_p^*A_q&A_p^*B_q\\B_p^*A_q&B_p^*B_q
\end{sbmatrix}\in \S_1\oplus\S_2$, we have $A_p^*B_q=0$, hence the
ranges of $A_p$ and $B_q$ are orthogonal for every~$p,q$. The
projections $P_1$ and $P_2$ onto the linear span of the ranges of
$A_1,\dots,A_m$ and $B_1,\dots,B_m$ are therefore orthogonal, so if
$k_1=\rank P_1$ and $k_2=k-\rank P_1$, then there is a unitary $U\colon \bC^k\to \bC^{k_1}\oplus \bC^{k_2}$ for which 
$UA_p=
\begin{sbmatrix}
A_p'\\0
\end{sbmatrix}$ for some $k_1\times n$ matrices $A_p'$, and $UB_p=
\begin{sbmatrix}
  0\\B_p'
\end{sbmatrix}$ for some $k_2\times n$ matrices $B_p'$. 
Now $C_p^*C_q=(UC_p)^*(UC_q)=
\begin{sbmatrix}
  {A_p'}^*A_q'&0\\0&{B_p'}^*B_q'
\end{sbmatrix}$, and it follows that $\gamma(\S_i)\leq k_i$ for
$i=1,2$. Hence, $\gamma(\S_1)+\gamma(\S_2)\leq k_1+ k_2=k=\gamma(\S)$.
Combined with (iv), this shows that $\gamma(\S_1\oplus \S_2) = \gamma(\S_1)+\gamma(\S_2)$.
\end{proof}

\begin{remark}\label{r_inco}
{\bf (i) } Let $\pi\in\{\alpha,\beta,\gamma\}$ and $d\in \bb{N}$. Then 
$\pi(M_d(\cl S)) = \pi(\cl S)$.  Indeed, by Proposition \ref{prop:elem} (ii), 
we have $\pi(\cl S) \leq \pi(M_d(\cl S))$, and 
the reverse inequality for $\pi\in\{\beta,\gamma\}$ follows from Proposition~\ref{prop:elem}~(iii) and Remark~\ref{remark:pi=1}~(iii). To see the corresponding result for $\pi=\alpha$,
suppose that $\{\xi_p\}_{p=1}^m$ is an independent set for $\cl S\otimes M_d$. 
Then for $X,Y\in M_d$, $A\in \cl S$ and $p\neq q$, we have
\begin{equation}\label{eq_xya}
\langle (A\otimes I)(I\otimes X)\xi_p, (I\otimes Y)\xi_q\rangle = 0
\end{equation}
Let $Q_p$ be the projection onto $\spn\left\{(I\otimes X)\xi_p : X\in M_d\right\}$; then $Q_p = E_p\otimes I_d$
for some non-zero projection $E_p$ on $\bb{C}^n$, and 
(\ref{eq_xya}) implies that $E_q\cl S E_p = \{0\}$ provided $p\neq q$.
If $v_p$ is a unit vector with $E_p v_p = v_p$, $p\in [m]$, we therefore have that $\{v_p\}_{p=1}^m$ is an 
independent set for $\cl S$. It follows that $\alpha(\cl S) \geq \alpha(M_d(\cl S))$ and hence we have equality. 

\smallskip

\noindent{\bf (ii) } 
The parameter  $\gamma$ is neither order-preserving nor order-reversing for
  inclusion. For example, $\bC I_2\subseteq \S\subseteq M_2$ where
  $\S$ is the operator system of~Proposition~\ref{prop:n=2,gamma=3}, and
  these operator systems have $\gamma$-values $2,3,1$, respectively.
\end{remark}

We will now show that like its graph-theoretic counterpart, namely, the minimum semidefinite rank, 
$\gamma(\S)$ is the solution to a rank minimisation problem.

\begin{proposition}\label{p_eta}
  For any operator system~$\S\subseteq M_n$, we have
  \begin{multline*}
    \gamma(\S) = \min_{m\in \bb{N}} \biggr\{ \rank B \colon B=[B_{i,j}] \in M_m(\S)^+ \text{ with } \\
    \ifieee\else\hphantom{\iff\iff}\fi\spn \{B_{i,j}\colon i,j\in [m]\}=\S\text{ and }\sum_{i=1}^m B_{i,i}=I_n \biggr\}
  \end{multline*}
  and
\ifieee\begin{multline*}
    \beta(\cl S) = \min_{m\in \bb{N}} \biggr\{ \rank B : B=[B_{i,j}] \in M_m(\cl S)^+\ifieee\\\fi \text{ and } \sum_{i=1}^m B_{i,i} = I_n \biggr\}.
\end{multline*}
\else
\[
    \beta(\cl S) = \min_{m\in \bb{N}} \biggr\{ \rank B : B=[B_{i,j}] \in M_m(\cl S)^+ \text{ and } \sum_{i=1}^m B_{i,i} = I_n \biggr\}.
\]
\fi
Moreover, the minima on the right hand sides are achieved for~$m$ not exceeding $2n^3$. 
\end{proposition}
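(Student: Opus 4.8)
The plan is to use, exactly as in the proof of Proposition~\ref{prop:F+H+}, the correspondence between Kraus operators of a channel and Cholesky-type factorisations of a positive block matrix, now recording when the blocks lie in $\S$. Throughout, call $B=[B_{i,j}]$ \emph{admissible for the first formula} if $B\in M_m(\S)^+$, $\spn\{B_{i,j}\}=\S$ and $\sum_i B_{i,i}=I_n$, and \emph{admissible for the second} if $B\in M_m(\S)^+$ and $\sum_i B_{i,i}=I_n$.

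\emph{Channels to block matrices.} Given a quantum channel $\Phi\colon M_n\to M_k$ with Kraus operators $A_1,\dots,A_m\in M_{k,n}$, so $\sum_{p=1}^m A_p^*A_p=I_n$, I would set $B=[B_{p,q}]_{p,q=1}^m$ with $B_{p,q}=A_p^*A_q$ and $X=[A_1\ \cdots\ A_m]\in M_{k,mn}$, so $B=X^*X\in M_{mn}^+$ and $\rank B=\rank X\le k$. Here $\sum_p B_{p,p}=I_n$ and $\spn\{B_{p,q}\colon p,q\in[m]\}=\S_\Phi$; hence $\S_\Phi\subseteq\S$ makes $B$ admissible for the second formula and $\S_\Phi=\S$ makes it admissible for the first. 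Applying this to a channel realising $\gamma(\S)$ (respectively $\beta(\S)$) shows the right-hand side of the first (respectively second) formula is $\le\gamma(\S)$ (respectively $\le\beta(\S)$).

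\emph{Block matrices to channels.} Conversely, given an admissible $B=[B_{i,j}]$, put $k=\rank B$ and factor $B=X^*X$ with $X=[X_1\ \cdots\ X_m]\in M_{k,mn}$, $X_i\in M_{k,n}$. Then $X_i^*X_j=B_{i,j}$, so $\sum_i X_i^*X_i=I_n$ and the $X_i$ are Kraus operators of a channel $\Phi\colon M_n\to M_k$ with $\S_\Phi=\spn\{B_{i,j}\}$, which is contained in $\S$, and equals $\S$ when $B$ is admissible for the first formula. Thus every $B$ admissible for the second formula has $\rank B\ge\beta(\S)$, and every $B$ admissible for the first has $\rank B\ge\gamma(\S)$; combined with the previous paragraph, this yields both identities.

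\emph{The bound $m\le 2n^3$, and the main obstacle.} To see the minima are attained with $m\le 2n^3$, I would revisit the first direction, starting from a channel $\Phi$ realising $\gamma(\S)$ (respectively $\beta(\S)$) with codomain $M_k$, $k=\gamma(\S)$ (respectively $k=\beta(\S)$), and replacing its Choi--Kraus representation by one with the least possible number $m$ of Kraus operators. This minimal $m$ equals the rank of the Choi matrix of $\Phi$, which lies in $M_{nk}$, so $m\le nk$; since $\beta(\S)\le\gamma(\S)\le 2n^2$ by Corollary~\ref{c_bong}, in both cases $m\le nk\le 2n^3$. The resulting $B$ then has block size $m\le 2n^3$ and $\rank B\le k$, and re-running the second direction on $B$ gives $\rank B\ge k$, so $\rank B=k$ attains the minimum. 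The one point needing care is exactly this last claim: shrinking the Kraus representation to control $m$ must not spoil optimality of $\rank B$. This is handled by the re-application of the second direction together with the fact, recalled in the text, that $\S_\Phi$ does not depend on the chosen Choi--Kraus representation, so no separate codomain-compression argument is required.
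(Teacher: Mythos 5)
Your proof is correct and follows essentially the same route as the paper: the Kraus-operator/Gram-factorisation correspondence for both inequalities, and Choi's bound of $nk$ Kraus operators together with Corollary~\ref{c_bong} for the bound $m\le 2n^3$. The only (harmless) difference is that you prove the $\beta$ formula directly by dropping the spanning condition, whereas the paper deduces it from the $\gamma$ formula via $\beta(\S)=\min\{\gamma(\T):\T\subseteq\S\}$.
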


\begin{proof}
  Suppose that $m\in \bN$ and that 
  $B = [B_{i,j}]\in M_m(\S)^+$ satisfies the relations
  $\spn \{B_{i,j}\}_{i,j=1}^m = \cl S$ and $\sum_{i=1}^m B_{i,i} = I_n$. 
  Then $B=A^*A$ for some
  $A=[A_1\ \dots\ A_m]\in M_{1,m}(M_{k,n})$ where $k=\rank B$. 
  Since
  $B_{i,j}=A_i^*A_j$, we see that $\{A_1,\dots,A_m\}$ are Kraus
  operators for a quantum channel~$\Phi$ with $\S_\Phi=\S$; thus,
  $\gamma(\S)\leq k$.

  Conversely, let $k=\gamma(\S)$, $m\in \bN$ and
  $A_1,\dots,A_m\in M_{k,n}$ be Kraus operators for a quantum
  channel~$\Phi$ with $\S_\Phi=\S$.
  Set $B:=[A_i^*A_j]\in M_m(\S)^+$; we have 
  $\spn \{ A_i^*A_j: i,j \in [m] \} = \cl S$, $\sum_{i=1}^m A_i^*A_i = I_n$ and $\rank B=\rank [A_1\ \dots\ A_m]\leq k$.
  Hence the minimum rank in the first expression is no greater than $\gamma(\S)$.
  
To see that some $m\le 2n^3$ attains this minimum, set $k = \gamma(\cl S)$. 
Then there exists a quantum channel $\Phi : M_n\to M_k$ with $\cl S_{\Phi} = \cl S$ and,
by Corollary~\ref{c_bong}, $k\le 2n^2$.
By~\cite[Remark~6]{choi}, the channel $\Phi$ can be realised using at most $nk\leq 2n^3$ Kraus operators. Since $m$ is precisely the number of Kraus operators in the preceding argument, we see that 
the minimum in the expression for $\gamma(\S)$ is attained for some $m\leq 2n^3$. 

The expression for $\beta(\cl S)$ follows from the fact that 
$\beta(\cl S) = \min \{ \gamma(\cl T): \cl T \subseteq \cl S \}.$
Since the bound on $m$ for $\gamma$, namely $2n^3$, is independent of the operator system $\cl S\subseteq M_n$, this fact shows that here we may also take $m\le 2n^3$.
\end{proof}

Our next task is to show that the operator system parameters just defined generalise the graph parameters of Section~\ref{sec:graphs}. 
Recall that if $G$ is a graph with vertex set~$[n]$, we let 
$$\S_G = \spn\{E_{i,j}\colon i\simeq j\}$$
be the associated operator subsystem of $M_n$. 

\begin{lemma}\label{lem:Delta_x}
  Let $n,k\in\bN$, and let $\Delta$ be the group of diagonal $n \times n$ matrices whose diagonal
entries are each either $1$ or $-1$. Let $x=(x_1,\dots,x_n)$ be an $n$-tuple of non-zero vectors in~$\bC^k$, and let 
$A = [\hat x_1~\cdots~\hat x_n]$ be the
$k\times n$ matrix whose $i$-th column is the unit vector $\hat x_i=\|x_i\|^{-1}x_i$. Then the map
\[\Delta_x\colon M_n\to M_k,\quad \Delta_x(X)=2^{-n}\sum_{D\in \Delta} ADXDA^*,\]
is a quantum channel. Moreover, if $x_i \in \bR_+^k\setminus\{0\}$, $i = 1,\dots,n$, then $\Delta_x$ is non-cancelling.
\end{lemma}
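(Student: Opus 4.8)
The plan is to check the two defining properties of a quantum channel directly, reading off a convenient set of Kraus operators, and then to observe that these are of non-cancelling form when the $x_i$ are entrywise non-negative. For complete positivity, note that for each fixed $D\in\Delta$ the map $X\mapsto (AD)X(AD)^*=ADXDA^*$ (using $D^*=D$) is completely positive, being implemented by the single matrix $AD\in M_{k,n}$; since $\Delta_x$ is the non-negative combination $2^{-n}\sum_{D\in\Delta}$ of such maps, and non-negative combinations of completely positive maps are completely positive, $\Delta_x$ is completely positive. This also exhibits $\{\,2^{-n/2}AD : D\in\Delta\,\}$ as a family of Kraus operators for $\Delta_x$.

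For trace-preservation I would verify that this family satisfies $\sum_{D\in\Delta}(2^{-n/2}AD)^*(2^{-n/2}AD)=I_n$, equivalently $2^{-n}\sum_{D\in\Delta}D(A^*A)D=I_n$. The key ingredient is the elementary averaging identity: for any $M\in M_n$, writing $D=\diag(\epsilon_1,\dots,\epsilon_n)$ with each $\epsilon_i\in\{1,-1\}$, we have $(DMD)_{i,j}=\epsilon_i\epsilon_jM_{i,j}$, and $\sum_{D\in\Delta}\epsilon_i\epsilon_j$ equals $2^n$ if $i=j$ and $0$ otherwise; hence $2^{-n}\sum_{D\in\Delta}DMD=\diag(M_{1,1},\dots,M_{n,n})$ is the diagonal part of $M$. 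Applying this with $M=A^*A$, and using that the Gram matrix $A^*A$ has diagonal entries $(A^*A)_{i,i}=\|\hat x_i\|^2=1$ (which is exactly why $A$ is built from the \emph{unit} vectors $\hat x_i$), we obtain $2^{-n}\sum_{D\in\Delta}D(A^*A)D=I_n$. Thus $\Delta_x$ is a quantum channel.

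For the last assertion, suppose $x_i\in\bR_+^k\setminus\{0\}$ for every $i$. Then each column $\hat x_i=\|x_i\|^{-1}x_i$ of $A$ has non-negative entries, so $A$ is entrywise non-negative. Rewriting the Kraus operators above as $2^{-n/2}AD=A\,(2^{-n/2}D)$ with $D\in\Delta$, and noting that $2^{-n/2}D$ is an invertible diagonal matrix, we conclude that $\Delta_x$ has a set of Kraus operators each of the form $AD'$ with $A$ entrywise non-negative and $D'$ an invertible diagonal matrix; by definition, $\Delta_x$ is non-cancelling.

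I do not anticipate a genuine obstacle here. The only step that needs any thought is the averaging identity that projects $M\mapsto 2^{-n}\sum_{D\in\Delta}DMD$ onto the diagonal part, which is routine; the remaining care is purely bookkeeping of the $2^{-n}$ versus $2^{-n/2}$ normalisations and the observation that averaging over only $\pm1$ diagonal matrices (rather than a larger group) is precisely what keeps the Kraus operators in non-cancelling form in the non-negative case.
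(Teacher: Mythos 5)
Your proof is correct and follows essentially the same route as the paper: the trace-preservation check reduces to the averaging identity $2^{-n}\sum_{D\in\Delta}D(A^*A)D=I_n$, which holds because $\sum_D d_id_j$ vanishes for $i\ne j$ and the columns $\hat x_i$ are unit vectors, and the non-cancelling claim is read off from the Kraus operators $2^{-n/2}AD=A(2^{-n/2}D)$. Your explicit verification of complete positivity is a point the paper leaves implicit, but the substance is the same.
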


\begin{proof}
For $D\in \Delta$, let $d_i\in\{1,-1\}$ be the $i$-th diagonal entry of $D$. We have
\begin{align*}
  2^{-n}\sum_{D\in \Delta}(DA^*AD)_{ij}
  =2^{-n}\langle \hat x_j,\hat x_i\rangle\sum_{D\in \Delta}d_{i} d_{j}=\delta_{ij},
\end{align*}
since if $i\ne j$ then the sum reduces to~$0$ by symmetry, whereas
if~$i=j$ then every term in the sum is~$1$. Since each~$\hat x_i$ is a unit vector, we obtain $2^{-n}\sum_{D\in \Delta} DA^*AD=I_n$, so $\Delta_x$ is a quantum channel. The assertion about non-cancelling channels follows trivially.
\end{proof}

\begin{proposition}\label{pr_commsq}
  Let $n,k\in \bN$, $x_i$ be a non-zero vector in~$\bC^k$, $i= 1,\dots,n$, and 
  $x = (x_1,\dots,x_n)$. Then $\S_{G(x)} = \S_{\Delta_x}$. 
\end{proposition}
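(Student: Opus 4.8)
The plan is to compute the non-commutative confusability graph $\S_{\Delta_x}$ directly from a Choi--Kraus representation of $\Delta_x$ and recognise the answer as $\S_{G(x)}$. Lemma~\ref{lem:Delta_x} already exhibits Kraus operators for $\Delta_x$: writing $K_D = 2^{-n/2}AD$ for $D\in\Delta$ (so that $\Delta_x(X)=\sum_D K_D X K_D^*$), we have
\[
K_D^*K_{D'} = 2^{-n}\, D A^*A\, D',\qquad D,D'\in\Delta.
\]
Thus by definition $\S_{\Delta_x}=\spn\{DA^*A D' : D,D'\in\Delta\}$. The first step is therefore to understand this span. Since $(A^*A)_{ij}=\langle \hat x_j,\hat x_i\rangle$, conjugating $A^*A$ by $D$ on the left and $D'$ on the right only multiplies the $(i,j)$ entry by $d_i d'_j\in\{1,-1\}$; in particular $(DA^*AD')_{ij}\ne 0$ if and only if $\langle \hat x_j,\hat x_i\rangle\ne 0$, i.e.\ if and only if $i\simeq_{G(x)} j$. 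Hence every matrix in $\S_{\Delta_x}$ is supported on the edge set of $G(x)$, giving the inclusion $\S_{\Delta_x}\subseteq \S_{G(x)}$.

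For the reverse inclusion I would show that each matrix unit $E_{i,j}$ with $i\simeq_{G(x)} j$ lies in $\S_{\Delta_x}$. Fix such a pair $(i,j)$. Consider the averaging/sign trick: summing $DA^*AD'$ over a suitable subgroup of sign patterns kills all entries except those in a prescribed pattern. Concretely, for a single diagonal matrix $D_S$ with $-1$'s exactly in a subset $S\subseteq[n]$, the standard Walsh--Hadamard (Fourier-on-$\bZ_2^n$) identity
\[
2^{-n}\sum_{S\subseteq[n]} (-1)^{|S\cap\{i\}|}(-1)^{|S\cap\{j\}|}\, D_S M D_S' \quad\text{(for appropriate choices of }D_S')
\]
isolates the $(i,j)$ entry of $M$; taking $M=A^*A$ and noting the $(i,j)$ entry is the nonzero scalar $\langle\hat x_j,\hat x_i\rangle$ produces a nonzero multiple of $E_{i,j}$ (together with its transpose partner $E_{j,i}$, which is fine since $\S_{\Delta_x}$ is self-adjoint and we can separate the symmetric/antisymmetric parts, or simply take a second combination). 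Since $\S_{\Delta_x}$ is a subspace containing all such $E_{i,j}$ and $E_{j,i}$ for $i\simeq_{G(x)}j$, and these span $\S_{G(x)}$, we get $\S_{G(x)}\subseteq\S_{\Delta_x}$.

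The only genuinely delicate point is the second step: the set $\{DA^*AD' : D,D'\in\Delta\}$ is not obviously closed under the sign-averaging, since averaging produces linear combinations, but that is exactly why we pass to the \emph{span} $\S_{\Delta_x}$ — linear combinations are allowed there. So the real content is just the Fourier-inversion computation over $\bZ_2^n\times\bZ_2^n$ showing that from the matrices $DA^*AD'$ one can extract each individual entry $E_{i,j}$ scaled by $(A^*A)_{ij}$; once one observes $(A^*A)_{ij}\ne 0\iff i\simeq_{G(x)}j$, both inclusions are immediate. I expect no essential obstacle beyond bookkeeping; the statement is really the assertion that "masking by diagonal $\pm1$ conjugations recovers exactly the support pattern", and the support pattern of $A^*A$ is by construction the adjacency relation of $G(x)$.
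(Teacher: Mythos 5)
Your proposal is correct and follows essentially the same route as the paper's proof: one inclusion comes from observing that $(DA^*AD')_{ij}=d_i\langle\hat x_j,\hat x_i\rangle d'_j$ vanishes exactly when $i\not\simeq_{G(x)}j$, and the other from averaging $DA^*AD'$ over the sign group $\Delta\times\Delta$ to extract the nonzero multiple $\langle\hat x_j,\hat x_i\rangle E_{i,j}$ of each matrix unit supported on an edge. The paper implements your ``Fourier inversion over $\bZ_2^n\times\bZ_2^n$'' concretely by summing over the pairs $(D,D')$ with $d_i=d'_j=1$, which yields $4^{n-1}E_{i,i}A^*AE_{j,j}=4^{n-1}\langle\hat x_j,\hat x_i\rangle E_{i,j}$ and so isolates $E_{i,j}$ directly, without the transpose-partner issue you flag.
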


\begin{proof}
  Let
  $\S=\S_{G(x)}$ and $\T=\S_{\Delta_x}$. 
  Set $\hat x_i=\|x_i\|^{-1}x_i$ and $A=[\hat x_1~\cdots~\hat x_n]$, and note 
that $\T$ is spanned
  by the operators $DA^*AD'$ for $D=\diag(d_1,\dots,d_n)$ and
  $D'=\diag(d'_1,\dots,d'_n)$ in $\Delta$.  
  For $i,j\in [n]$, we have
  \begin{align*} \sum_{\substack{D,D'\in \Delta,\\d_i=d'_j=1}}DA^*AD' &=
  \Big(\sum_{\substack{D\in \Delta,\\d_{i}=1}}D\Big)A^*A
  \Big(\sum_{\substack{D'\in \Delta,\\d'_{j}=1}}D'\Big)\\
  &= 4^{n-1}
    E_{i,i}A^*A E_{j,j}\\
  &= 4^{n-1}\langle \hat x_j,\hat x_i\rangle E_{i,j}.
\end{align*}
If $E_{i,j}\in \S$, then $i\simeq_{G(x)}j$, so
$\langle \hat x_j,\hat x_i\rangle\ne 0$, hence $E_{i,j}\in \T$. Thus $\S\subseteq \T$. 
On the other hand, $\S^\perp$ is spanned by the matrix units $E_{i,j}$ with $i\not\simeq_{G(x)} j$. 
For such $i,j$ and any~$D,D'\in \Delta$, we have $(DA^*AD')_{ji}=d_j\langle \hat x_i,\hat x_j\rangle d'_i=0$, 
so $E_{i,j}\in \T^\perp$. Hence $ \S^\perp\subseteq \T^\perp$ as required.
\end{proof}

\begin{theorem}\label{theorem:gen}
  For any graph $G$ with vertex set~$[n]$ and $\pi\in\{\alpha,\beta,\gamma,\cc\}$, we have $\pi(\S_G)=\pi(G)$.
\end{theorem}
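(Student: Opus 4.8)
The plan is to establish the equality $\pi(\S_G) = \pi(G)$ for each $\pi \in \{\alpha, \beta, \gamma, \cc\}$ by proving two inequalities in each case, using the constructions already developed. The key tool in one direction is the channel $\Delta_x$ of Lemma~\ref{lem:Delta_x} together with Proposition~\ref{pr_commsq}, which says $\S_{G(x)} = \S_{\Delta_x}$; the key tool in the other direction is the block-matrix characterisations of $\gamma(\S)$ and $\beta(\S)$ from Proposition~\ref{p_eta} and of $\gamma(G)$ via $\mpsd$ from Propositions~\ref{prop:F+H+}, together with Theorem~\ref{thm:qinter=mpsd}.

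First I would dispatch the case $\pi = \alpha$ directly from the definitions: an $\S_G$-independent set is a tuple of non-zero vectors $(x_1,\dots,x_m)$ in $\bC^n$ with $x_p x_q^* \in \S_G^\perp = \spn\{E_{i,j} : i\not\simeq_G j\}$ for $p\ne q$. One checks this forces the supports of the $x_p$ to be disjoint unions of $G$-independent sets, and the largest such yields exactly $\alpha(G)$; conversely an independent set in $G$ gives the standard basis vectors as an $\S_G$-independent set. Next, for $\pi \in \{\gamma, \cc\}$ and the $\geq$ direction ($\pi(\S_G) \leq \pi(G)$): given a tuple $x \in (\bC^k\setminus\{0\})^n$ (resp.\ in $(\bR_+^k\setminus\{0\})^n$) realising $G(x) = G$ with $k = \gamma(G)$ (resp.\ $k = \cc(G)$), Lemma~\ref{lem:Delta_x} produces a quantum channel $\Delta_x : M_n \to M_k$ (non-cancelling in the $\bR_+$ case), and Proposition~\ref{pr_commsq} gives $\S_{\Delta_x} = \S_{G(x)} = \S_G$, so $\gamma(\S_G) \leq k$ (resp.\ $\cc(\S_G) \leq k$). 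For $\beta$, the $\geq$ direction follows from $\beta(\S_G) = \min\{\gamma(\T) : \T \subseteq \S_G\} \leq \min\{\gamma(\S_H) : H \subseteq G\} = \min\{\gamma(H) : H\subseteq G\} = \beta(G)$, once the $\gamma$ equality is known (noting $\S_H \subseteq \S_G$ when $H\subseteq G$).

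The reverse inequalities are where the work lies. For $\pi = \gamma$ and $\pi(\S_G) \geq \pi(G)$: take $k = \gamma(\S_G)$ and use Proposition~\ref{p_eta} to get $B = [B_{i,j}] \in M_m(\S_G)^+$ with $\spn\{B_{i,j}\} = \S_G$, $\sum_i B_{i,i} = I_n$, and $\rank B = k$. Each block $B_{i,j} \in \S_G \subseteq M_n$, so write $B$ as an $mn \times mn$ positive matrix and pass to its $n\times n$ ``diagonal-of-each-block'' data: since $\S_G$ is spanned by matrix units $E_{a,b}$ with $a\simeq_G b$, the condition $B_{i,j} \in \S_G$ records which matrix-unit positions can be nonzero. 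I would compress $B$ by a suitable coordinate projection to extract, for each vertex $a \in [n]$, a principal submatrix recording the $a$-th ``slice'', producing an element of $\F^+_t(G)$ for an appropriate $t \in \bN^n$ with $\rank \le k$, hence $\mpsd[t](G) \le k$; then Theorem~\ref{thm:qinter=mpsd} gives $\gamma(G) = \qinter(G) = \min_t \mpsd[t](G) \le k$. The main obstacle is making this compression precise — relating the block structure of $B \in M_m(\S_G)^+$ (indexed by Kraus indices $\times$ matrix coordinates) to the block structure of an $\F^+_t(G)$ matrix (indexed purely by vertices); the natural move is to regroup $\bC^m \otimes \bC^n$ by vertex, setting $t_a = \dim$ of the $a$-th vertex space after removing redundancy, and verify the off-diagonal vanishing/nonvanishing pattern matches $i\simeq_G j$ exactly, which uses that $\spn\{B_{i,j}\} = \S_G$ (for the ``only the allowed positions, and all of them'' part). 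For $\pi = \cc$: if $\Phi : M_n \to M_k$ is non-cancelling with $\S_\Phi = \S_G$, its Kraus operators have the form $A_\ell = A^{(\ell)} D^{(\ell)}$ with $A^{(\ell)}$ entrywise non-negative and $D^{(\ell)}$ invertible diagonal; then $A_\ell^* A_\ell$ has non-negative entries, and I would argue the supports assemble into set-systems $R_i \subseteq [k']$ (for some $k' \le k$, reading off rows) realising the set-intersection number, so $\inter_{\rm st}(G) \le k$, and invoke Remark~\ref{r_21}(iii) that $\cc(G) = \inter_{\rm st}(G)$. Finally $\pi = \beta$ is handled by combining the $\gamma$-equality with the two ``$\min$ over subgraphs / subsystems'' formulas as indicated above, checking the bijection $H \mapsto \S_H$ between subgraphs of $G$ and a cofinal family of operator subsystems of $\S_G$ respects $\gamma$.
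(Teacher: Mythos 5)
Your overall strategy matches the paper's: the inequalities $\pi(\S_G)\le\pi(G)$ come from Lemma~\ref{lem:Delta_x} and Proposition~\ref{pr_commsq}, and the reverse inequalities hinge on Theorem~\ref{thm:qinter=mpsd}. For $\gamma$ your ``compression'' is sound once made precise: writing $B=X^*X$ with $X=[A_1\ \cdots\ A_m]$, regroup the $mn$ columns by vertex, so that vertex $a$ contributes the columns $\{A_pe_a\colon p\in[m]\}$; since $\langle A_qe_b,A_pe_a\rangle=(A_p^*A_q)_{ab}$ and $\spn\{A_p^*A_q\}=\S_G$, the $(a,b)$ block vanishes exactly when $a\not\simeq_G b$, and selecting a spanning subset of the columns for each $a$ yields a principal submatrix of (a permutation of) $B$ lying in $\F^+_t(G)$ with $t_a=\dim\spn\{A_pe_a\colon p\in[m]\}$ and rank at most $\rank B$. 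This is precisely the paper's construction of the projections $P_a$ onto $\spn\{A_pe_a\colon p\in[m]\}$ satisfying $G(P)=G$, read through Proposition~\ref{prop:F+H+}. The $\alpha$ and $\cc$ cases are also essentially as in the paper (which simply cites \cite{dsw} for $\alpha$); for $\cc$ the key point, which your sketch contains, is that entrywise non-negativity forbids cancellation in $\langle A_pe_i,A_qe_j\rangle=\sum_r\langle A_pe_i,e_r\rangle\langle e_r,A_qe_j\rangle$, so non-orthogonality is detected by intersection of supports.

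The genuine gap is in the inequality $\beta(G)\le\beta(\S_G)$. You propose to deduce it formally from the $\gamma$-equality via a ``bijection between subgraphs of $G$ and a cofinal family of operator subsystems of $\S_G$'', but $\beta(\S_G)=\min\{\gamma(\T)\colon\T\subseteq\S_G\}$ ranges over \emph{all} operator subsystems of $\S_G$, most of which are not of the form $\S_H$; and since $\gamma$ is neither order-preserving nor order-reversing under inclusion (Remark~\ref{r_inco}~(ii)), knowing that some $\S_H$ is contained in (or contains) a minimising $\T$ gives no bound on $\gamma(\S_H)$ in terms of $\gamma(\T)$. The repair is not formal: given a channel $\Phi\colon M_n\to M_k$ with $\S_\Phi\subseteq\S_G$ and $k=\beta(\S_G)$, you must rerun the projection construction, letting $P_a$ be the projection onto $\spn\{A_pe_a\colon p\in[m]\}$; then $P_aP_b=0$ whenever $E_{ab}\in\S_G^\perp\subseteq\S_\Phi^\perp$, so $H:=G(P)$ is a subgraph of $G$ with $\gamma(H)=\qinter(H)\le k$, whence $\beta(G)\le k$. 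This is what the paper's phrase ``similar to the case $\pi=\gamma$'' amounts to, and your write-up as it stands does not contain it.
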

\begin{proof}
  The case $\pi=\alpha$ is known~\cite{dsw}.

We next consider the case $\pi = \gamma$. If
  $k = \gamma(G)$, then there exists $x = (x_1,\dots,x_n)$,
  where $x_i\in\bC^k\setminus\{0\}$, $i = 1,\dots,n$, so that $G(x) = G$, and hence
  $\S_{G(x)} = \S_G$. By Proposition~\ref{pr_commsq}, $\S_{G(x)}$
  is the operator system of a quantum channel $M_n \to M_k$, hence
  $\gamma(\S_G)\leq k = \gamma(G)$.

  Now let $k = \gamma(\S_G)$, so that there are Kraus operators
  $A_1,\dots,A_m\in M_{k,n}$ for a quantum channel
  $\Phi\colon M_n\to M_k$ with
  $\S_\Phi = \spn\{A_q^*A_p\colon p,q\in [m]\} = \S_G$. 
  Since the column operator with entries $A_1,\dots,A_m$ is an isometry, for each $i\in [n]$ we have
  $\sum_{p=1}^m\|A_p e_i\|^2=\|e_i\|=1$. In particular, $A_pe_i\ne 0$ for at least one $p\in [m]$.
  Thus, the projection $P_i\in M_k$ onto the span of
  $\{A_pe_i\colon p\in [m]\}$ is non-zero. 
  
  Consider the tuple $P = (P_1,\dots,P_n)\in \P(k)$.
   Since $\S_\Phi=\S_G$, for $i,j\in [n]$ we have
\ifieee
  \begin{align*}
    i\not\simeq_{G(P)}j & \iff P_iP_j=0\\&\iff \langle A_pe_i,A_q e_j\rangle=\tr(E_{ij}^*A_q^*A_p)=0,\\&\hphantom{\iff\iff\iff\iff\iff}\text{for all $p,q\in[m]$}\\
    &\iff E_{ij}\in \S_\Phi^\perp=\S_G^\perp \iff i\not\simeq_G j.
  \end{align*}
\else
  \begin{align*}
    i\not\simeq_{G(P)}j & \iff P_iP_j=0\\&\iff \langle A_pe_i,A_q e_j\rangle=\tr(E_{ij}^*A_q^*A_p)=0\;\text{for all $p,q\in[m]$}\\
    &\iff E_{ij}\in \S_\Phi^\perp=\S_G^\perp \iff i\not\simeq_G j.
  \end{align*}
\fi
  Hence $G(P)=G$. Using Theorem~\ref{thm:qinter=mpsd}, we obtain
  $\gamma(G)=\qinter(G)\leq k=\gamma(\S_G)$.

The case $\pi = \beta$ is similar to the case $\pi = \gamma$; alternatively, see \cite[Theorem 12]{stahlke}.

Let $\pi =  \cc$.  
Set $k=\cc(G)$;
  then there exists a tuple $x=(x_1,\dots,x_n)\in(\bR_+^k\setminus\{0\})^n$ with $G(x)=G$. 
  By Proposition \ref{pr_commsq}, 
  $\S_{\Delta_x} = \S_{G(x)} = \S_G$. 
  By Lemma~\ref{lem:Delta_x}, $\Delta_x$ is a non-cancelling quantum channel; therefore, $\cc(\S_G)\leq k$. 
  In particular, $\cc(\S_G)<\infty$. Now if $k=\cc(\S_G)$, let $\Phi\colon M_n\to M_k$ 
  be a non-cancelling quantum channel such that $\S_\Phi=\S_G$. Fix Kraus operators $A_1D_1,\dots,A_mD_m$ for $\Phi$ where each $A_p\in M_{k,n}$ is entrywise non-negative and each $D_p$ is an invertible diagonal $n\times n$ matrix. For $i\in [n]$, define
  \[R_i = \left\{r\in [k]\colon \langle A_pe_i,e_r\rangle\ne 0\text{ for some
    $p\in[m]$}\right\}.\]
Since the column operator with entries $A_1D_1,\dots,A_mD_m$ is an isometry, we have 
$\sum_{p=1}^m \|A_pD_pe_i\|^2=1$, and so $R_i$ is non-empty for all
$i \in [n]$. 
Let $x=(x_1,\dots,x_n)$ where $x_i=\sum_{r\in R_i}e_r$.
  For $i,j\in[n]$, we have
  \begin{align*}
    i\simeq_G j&\iff E_{j,j}\S_G E_{i,i}\ne \{0\}\\
    &\iff \exists\, p,q\in [m] \mbox{ such that } \langle A_pD_pe_{i},A_qD_q e_{j}\rangle\ne 0\\
    &\iff \exists\, p,q\in [m]  \mbox{ such that }  \langle A_pe_i,A_qe_{j}\rangle\ne 0.
  \end{align*}
  Now \[\langle A_pe_i,A_qe_{j}\rangle=\sum_{r\in [k]} \langle A_pe_i,e_r\rangle \langle e_r,A_qe_{j}\rangle
  \]
  and every term in the latter sum is non-negative. It follows that
\ifieee
  \begin{align*}
    i\simeq_Gj&\iff \exists\,p,q\in[m],\; r\in[k] \mbox{ with } \langle A_pe_i,e_r\rangle\ne 0\\&\hphantom{\iff\exists\,p,q\in[m],\; r\in[k]\;\;}\text{ and } \langle A_q e_{j},e_r\rangle\ne 0\\
    &\iff R_{i}\cap R_{j}\ne \emptyset\iff \langle x_{j},x_i\rangle\ne 0\\&\iff i\simeq_{G(x)}j.
  \end{align*}
\else
  \begin{align*}
    i\simeq_Gj&\iff \exists\,p,q\in[m],\; r\in[k] \mbox{ with } \langle A_pe_i,e_r\rangle\ne 0\text{ and } \langle A_q e_{j},e_r\rangle\ne 0\\
    &\iff R_{i}\cap R_{j}\ne \emptyset\iff \langle x_{j},x_i\rangle\ne 0\iff i\simeq_{G(x)}j.
  \end{align*}
\fi
  Hence $G = G(x)$, so $\cc(G)\leq k = \cc(\S_G)$.
\end{proof}

\begin{remark} \label{rk:classical-quantum}
Let $(p(y|x))$ be a $k \times n$ non-negative column-stochastic matrix defining a 
classical channel $\cl N: [n] \to [k]$ with confusability graph $G= G_{\cl N}$. 
The canonical quantum channel $\cl N_q : M_n \to M_k$ associated with $\cl N$ is defined by setting
$\cl N_q(E_{x,x}) = \sum_{y\in [k]} p(y|x) E_{y,y}$ and $\cl N_q(E_{x,x'}) = 0$ if $x \ne x'$.
We have that $\cl S_{\cl N_q} = \cl S_G$ \cite{dsw} (see also \cite{p_notes}).  
So we see that  $\gamma(G) = \gamma(\cl S_G)$ is the quantum complexity of the classical 
channel $\cl N$ when viewed as a quantum channel.
\end{remark}

Let $G\boxtimes H$ denote the {\it strong product} of the graphs $G$ and $H$~\cite{sa}, in which
$(x,y)\simeq_{G \boxtimes H} (x',y')$ if and only if $x\simeq_{G} x'$ and $y\simeq_{H} y'$.
Note that $\cl S_{G \boxtimes H} = \cl S_G \otimes \cl S_H$. 
If~$G,H$ are graphs and $n$ is the number of vertices of~$G$, then
\begin{itemize}
\item[(i)] $\alpha(G)=1$ if and only if $G=K_n$, i.e.,~if and only if $\S_G = M_n$;
\item[(ii)] $\gamma(G)\leq n$; and 
\item[(iii)] $\gamma(G\boxtimes H)\leq\gamma(G)\gamma(H)$, but it is unknown whether strict inequality can occur. 
\end{itemize}
The following proposition shows that the parameters for general operator systems $\S\subseteq M_n$
behave quite differently, with respect to the latter properties, 
than their graph theoretic counterparts.

\begin{proposition}\label{prop:n=2,gamma=3}
  Let $\S=\left\{
        \begin{sbmatrix}
          \lambda&a\\{}b&\lambda
        \end{sbmatrix}\colon a,b,\lambda\in\bC\right\}$. Then
     $\alpha(\S)=1$, $\beta(\S)=2$, $\gamma(\S)=\cc(\S)=3$ and $\gamma(\S\otimes\S)<\gamma(\S)^2$.
     \end{proposition}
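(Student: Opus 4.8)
The plan is to establish each of the four claims in turn about the operator system
\[
\S=\left\{\begin{sbmatrix}\lambda&a\\ b&\lambda\end{sbmatrix}\colon a,b,\lambda\in\bC\right\}\subseteq M_2,
\]
noting that $\dim\S=3$, so $\S^\perp$ is one-dimensional, spanned by $E_{1,1}-E_{2,2}$ (the unique traceless diagonal direction orthogonal to $I_2$, $E_{1,2}$, $E_{2,1}$). First I would dispatch $\alpha(\S)=1$: if $(x_1,x_2)$ were an $\S$-independent pair, then $x_1x_2^*\in\S^\perp=\bC(E_{1,1}-E_{2,2})$, which forces $x_1x_2^*$ to be a nonzero multiple of a diagonal rank-one matrix; but $E_{1,1}-E_{2,2}$ has rank $2$, not rank $1$, a contradiction, so no $\S$-independent set of size $2$ exists, while $\alpha(\S)\ge1$ trivially.

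Next, $\beta(\S)=2$. The lower bound $\beta(\S)\ge\alpha(\S)$ is too weak, so instead use $\beta(\S)\ge2$ from Remark~\ref{remark:pi=1}~(iii), since $\S\ne M_2$. For the upper bound $\beta(\S)\le2$ I would exhibit a quantum channel $\Phi\colon M_2\to M_2$ with $\S_\Phi\subseteq\S$ and, even better, hit a rank-$2$ positive block matrix $B\in M_m(\S)^+$ with $\sum B_{i,i}=I_2$ using the characterisation in Proposition~\ref{p_eta}; concretely, taking Kraus operators that are scalar multiples of the two matrix units $E_{1,2}$ and the map fixing off-diagonal structure, one checks that the span of the $A_p^*A_q$ lands inside $\S$ (in fact inside the diagonal-free part plus $I_2$) while the rank of the Gram matrix is $2$. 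Alternatively, $\beta(\S)=\min\{\gamma(\T):\T\subseteq\S\}$ and one can take $\T=\bC I_2$, whose $\gamma$ is $2$ by Remark~\ref{remark:pi=1}~(iv) (or by the observation that $\gamma(\bC I_n)=n$), giving $\beta(\S)\le\gamma(\bC I_2)=2$.

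For $\gamma(\S)=\cc(\S)=3$: since $\beta(\S)\le\gamma(\S)\le\cc(\S)$ by Theorem~\ref{thm:inncc}, it suffices to prove $\gamma(\S)\ge3$ and $\cc(\S)\le3$. For $\cc(\S)\le3$ I would write down an explicit non-cancelling channel $\Phi\colon M_2\to M_3$ whose Kraus operators are products $AD$ of an entrywise non-negative matrix with an invertible diagonal matrix and verify directly that $\spn\{A_p^*A_q\}=\S$; a natural candidate comes from a non-negative vector realisation $x=(x_1,x_2)\in(\bR_+^3\setminus\{0\})^2$ realising the relevant non-orthogonality pattern together with the channel $\Delta_x$ of Lemma~\ref{lem:Delta_x}, suitably modified so that its operator system is exactly $\S$ rather than a graph operator system. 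The heart of the matter is the lower bound $\gamma(\S)\ge3$: assuming $\gamma(\S)\le2$, by Proposition~\ref{p_eta} there is a rank-$\le2$ matrix $B=[B_{i,j}]\in M_m(\S)^+$ with $\spn\{B_{i,j}\}=\S$ and $\sum_i B_{i,i}=I_2$, equivalently Kraus operators $A_1,\dots,A_m\in M_{2,2}$ with $\sum A_p^*A_p=I_2$ and $\spn\{A_p^*A_q\}=\S$; I would argue that the constraints $A_p^*A_q\in\S$ (so every $A_p^*A_q$ has equal diagonal entries and the trace condition $\sum A_p^*A_p=I_2$) together with needing the span to be all of $\S$ — in particular to contain both $E_{1,2}$ and $E_{2,1}$ independently — cannot be met with $2\times2$ matrices, by a short dimension/rank argument on the columns of $A=[A_1\ \cdots\ A_m]$.

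Finally, $\gamma(\S\otimes\S)<\gamma(\S)^2=9$: since $\beta\le\gamma$ and $\gamma(\S\otimes\S)\le9$ always, I would produce a quantum channel $\Phi\colon M_4\to M_k$ with $k\le8$ and $\S_\Phi=\S\otimes\S$. The cleanest route is to find an explicit Kraus family for $\S\otimes\S$ of size realising a Gram matrix of rank at most $8$; since $\dim(\S\otimes\S)=9$, the Gram matrix $B=[A_p^*A_q]$ need not have rank as large as $9$, and one expects the tensor structure to allow a rank-$8$ (or smaller) realisation — for instance by taking the tensor of the rank-$3$ realisation of $\S$ with a cleverly chosen realisation that shares a common direction, or by a direct ad hoc construction exploiting that the required $9$-dimensional span can be spanned by $A_p^*A_q$ with the $A_p$ living in a space of dimension $<9$. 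The main obstacle I anticipate is precisely this last construction: proving $\gamma(\S\otimes\S)\le8$ requires genuinely producing the channel (or the low-rank positive block matrix in $M_m(\S\otimes\S)^+$), and making the rank count come out strictly below $9$ is the delicate combinatorial-linear-algebra step; the $\gamma(\S)\ge3$ lower bound is the other place where care is needed, but that one is a finite, small-dimensional check.
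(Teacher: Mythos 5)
Your treatment of $\alpha(\S)=1$ and $\beta(\S)=2$ is correct and essentially the paper's argument (rank of $E_{1,1}-E_{2,2}$ for the former; Remark~\ref{remark:pi=1}~(iii) and $\bC I_2\subseteq\S$ for the latter). The remaining claims, however, are left as sketches at exactly the points where the real work lies. The most serious issue is the lower bound $\gamma(\S)\ge 3$: you propose ``a short dimension/rank argument on the columns of $A=[A_1\ \cdots\ A_m]$'', but no such argument can work. The columns of Kraus operators into $M_k$ span a space of dimension at most $k$, yet this does not usefully bound $\dim\spn\{A_p^*A_q\}$: the trace channel $M_2\to\bC$ has $k=1$ and its Kraus products $e_pe_q^*$ span all of $M_2$, which is $4$-dimensional. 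The obstruction is structural, not dimensional, and comes from the requirement that \emph{every} product $A_p^*A_q$ have equal diagonal entries. Writing $A_p=[v_p\ w_p]$ with $v_p,w_p\in\bC^2$, this says $\langle v_q,v_p\rangle=\langle w_q,w_p\rangle$ for all $p,q$, which yields a single unitary $U\in M_2$ with $Uv_p=w_p$; trace preservation forces $\sum_p\langle Uv_p,v_p\rangle=0$, so $0$ lies in the numerical range of $U$, whence $\sigma(U)=\{\alpha,-\alpha\}$, $U=\alpha^2U^*$, and every $A_p^*A_q$ lies in the two-dimensional space $\spn\left\{I,\begin{sbmatrix}0&\alpha^2\\1&0\end{sbmatrix}\right\}$, contradicting $\dim\S=3$. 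Nothing in your outline supplies this mechanism, and I do not see a ``finite check'' that replaces it.

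The inequality $\gamma(\S\otimes\S)\le 8$ is also missing, as you acknowledge. The paper's proof is a genuinely ad hoc construction: four isometries $V_1,\dots,V_4\in M_{8,4}$ whose columns are standard basis vectors of $\bC^8$ arranged with a carefully designed overlap pattern, so that $W=\tfrac12[V_1\ V_2\ V_3\ V_4]$ gives $W^*W\in M_4(\S\otimes\S)^+$ of rank $8$, with blocks spanning $\S\otimes\S$ and block-diagonal summing to $I_4$; your suggested route (tensoring the rank-$3$ realisation with one ``sharing a common direction'') is undeveloped and there is no evident reason it lands at rank $\le 8$. A smaller gap: for $\cc(\S)\le 3$ you never exhibit a non-cancelling channel, and the proposed detour through $\Delta_x$ cannot work as stated, since $\S_{\Delta_x}$ is always a graph operator system (Proposition~\ref{pr_commsq}) while $\S$ is not one. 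The paper simply writes down the entrywise non-negative Kraus operators $\tfrac1{\sqrt2}\begin{sbmatrix}1&0\\0&0\\0&1\end{sbmatrix}$ and $\tfrac1{\sqrt2}\begin{sbmatrix}0&0\\0&1\\1&0\end{sbmatrix}$, which settles that point directly.
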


   \begin{proof}
  The Kraus operators $A_1=
  \frac1{\sqrt2}\begin{sbmatrix}
    1&0\\0&0\\0&1
  \end{sbmatrix}$ and $A_2=\frac1{\sqrt2}
  \begin{sbmatrix}
    0&0\\0&1\\1&0
  \end{sbmatrix}$ yield a non-cancelling quantum channel with operator system~$\S$,
  so $\gamma(\S)\leq\cc(\S)\le3$. Since $\S^\perp$ is spanned
  by $
  \begin{sbmatrix}
    1&0\\0&-1
  \end{sbmatrix}$, it contains no rank one operators, and hence
  $\alpha(\S)=1$. By Remark \ref{remark:pi=1} (v), $\beta(\S)\leq 2$ while, 
  by Remark~\ref{remark:pi=1} (iii),
  $\beta(\S)\ne 1$; thus, $\beta(\S)=2$.

  If $\gamma(\S)\le 2$, then there are $A_j=[v_j\ w_j]\in M_2$ for
  some $v_j,w_j\in \bC^2$ which are the Kraus operators of a quantum
  channel with operator system~$\S$. We have
  $\langle v_j,v_i\rangle = (A_i^*A_j)_{11}=(A_i^*A_j)_{22}=\langle
  w_j,w_i\rangle$ for each $i,j$. It follows that there is a
  $2\times 2$ unitary $U$ with $Uv_j=w_j$, $j = 1,2$. Recall that
$\sum_{i=1}^2 A_i^*A_i=I_2$. This is equivalent to 
$$\sum_{i=1}^2  \|v_i\|^2 = \sum_{i=1}^2  \|w_i\|^2 = 1 \ifieee$$and$$\else\ \mbox{ and }  \ \fi
\sum_{i=1}^2  \langle w_i,v_i\rangle = \sum_{i=1}^2  \langle Uv_i,v_i\rangle = 0.$$ 
In particular, $0$ lies in the
  numerical range of~$U$. However, the numerical range of a normal matrix
  is the convex hull of its spectrum, and hence
  $\sigma(U)=\{ \alpha,-\alpha\}$ for some $\alpha\in \mathbb{T}$. 
  Thus
  $\overline {\alpha} U$ is hermitian, and so $U = \alpha^2U^*$. Now
\ifieee
  \begin{align*}
    A_i^*A_j&=\left(\begin{smallmatrix} \langle v_j,v_i\rangle&\langle
        Uv_j,v_i\rangle\\\langle U^* v_j,v_i\rangle & \langle
        Uv_j,Uv_i\rangle \end{smallmatrix}\right) \\&= \langle
    v_j,v_i\rangle I + \langle U^*v_j,v_i\rangle
    \left(\begin{smallmatrix} 0&\alpha^2 \\1&0
      \end{smallmatrix}\right),
  \end{align*}
\else
  \begin{align*}
    A_i^*A_j&=\left(\begin{smallmatrix} \langle v_j,v_i\rangle&\langle
        Uv_j,v_i\rangle\\\langle U^* v_j,v_i\rangle & \langle
        Uv_j,Uv_i\rangle \end{smallmatrix}\right) = \langle
    v_j,v_i\rangle I + \langle U^*v_j,v_i\rangle
    \left(\begin{smallmatrix} 0&\alpha^2 \\1&0
      \end{smallmatrix}\right),
  \end{align*}
\fi
  hence
  $3=\dim\S=\dim\spn\{A_i^*A_j\}\le 2$, a contradiction.

  To see that~$\gamma(\S\otimes \S)<\gamma(\S)^2=9$, consider the isometries
  $V_i\in M_{8,4}$ given by
\ifieee
  \begin{align*}
    V_1&=[e_1\ e_2\ e_3\ e_4],\quad V_2=[e_2\ e_5\ e_4\ e_6],\\
    V_3&=[e_3\ e_4\ e_7\ e_8],\quad V_4=[e_6\ e_7\ e_5\ e_1].
  \end{align*}
\else
  \begin{align*}
    V_1&=[e_1\ e_2\ e_3\ e_4],\quad V_2=[e_2\ e_5\ e_4\ e_6],\quad
  V_3=[e_3\ e_4\ e_7\ e_8],\quad V_4=[e_6\ e_7\ e_5\ e_1].
  \end{align*}
\fi
  Let $W=\tfrac12[V_1\ V_2\ V_3\ V_4]\in M_{8,16}$; then
\[\newcommand{\eye}{\raisebox{-.6ex}{\text{\Large$I_4$}}}\newcommand{\oo}{0}W^*W=
  \frac14\left[\begin{smallmatrix}
      \eye
      &
      \begin{smallmatrix}\oo&\oo&\oo&\oo\\1&\oo&\oo&\oo\\\oo&\oo&\oo&\oo\\\oo&\oo&1&\oo\end{smallmatrix}
      &
      \begin{smallmatrix}\oo&\oo&\oo&\oo\\\oo&\oo&\oo&\oo\\1&\oo&\oo&\oo\\\oo&1&\oo&\oo\end{smallmatrix}
      &
      \begin{smallmatrix}\oo&\oo&\oo&1\\\oo&\oo&\oo&\oo\\\oo&\oo&\oo&\oo\\\oo&\oo&\oo&\oo\end{smallmatrix}
      \\[\medskipamount]
      \begin{smallmatrix}\oo&1&\oo&\oo\\\oo&\oo&\oo&\oo\\\oo&\oo&\oo&1\\\oo&\oo&\oo&\oo\end{smallmatrix}
      &
      \eye
      &
      \begin{smallmatrix}\oo&\oo&\oo&\oo\\\oo&\oo&\oo&\oo\\\oo&1&\oo&\oo\\\oo&\oo&\oo&\oo\end{smallmatrix}
      &
      \begin{smallmatrix}\oo&\oo&\oo&\oo\\\oo&\oo&1&\oo\\\oo&\oo&\oo&\oo\\1&\oo&\oo&\oo\end{smallmatrix}
      \\[\medskipamount]
      \begin{smallmatrix}\oo&\oo&1&\oo\\\oo&\oo&\oo&1\\\oo&\oo&\oo&\oo\\\oo&\oo&\oo&\oo\end{smallmatrix}
      &
      \begin{smallmatrix}\oo&\oo&\oo&\oo\\\oo&\oo&1&\oo\\\oo&\oo&\oo&\oo\\\oo&\oo&\oo&\oo\end{smallmatrix}
      &
      \eye
      &
      \begin{smallmatrix}\oo&\oo&\oo&\oo\\\oo&\oo&\oo&\oo\\\oo&1&\oo&\oo\\\oo&\oo&\oo&\oo\end{smallmatrix}
      \\[\medskipamount]
      \begin{smallmatrix}\oo&\oo&\oo&\oo\\\oo&\oo&\oo&\oo\\\oo&\oo&\oo&\oo\\1&\oo&\oo&\oo\end{smallmatrix}
      &
      \begin{smallmatrix}\oo&\oo&\oo&1\\\oo&\oo&\oo&\oo\\\oo&1&\oo&\oo\\\oo&\oo&\oo&\oo\end{smallmatrix}
      &
      \begin{smallmatrix}\oo&\oo&\oo&\oo\\\oo&\oo&1&\oo\\\oo&\oo&\oo&\oo\\\oo&\oo&\oo&\oo\end{smallmatrix}
      &
      \eye
    \end{smallmatrix}\right]
  \]

  Writing
  $B_{i,j}$ for the $(i,j)$-th $4\times4$ block of~$W^*W$, we observe that
  \[\spn\left\{B_{i,j}\colon i,j\in [4]\right\} = \S\otimes \S\qand \sum_{i=1}^4B_{i,i}=I_4.\] 
  By Proposition~\ref{p_eta}, 
  \ifieee
  \[\gamma(\S\otimes \S)\leq \rank W^*W=\rank W= 8.\qedhere\]
  \else$\gamma(\S\otimes \S)\leq \rank W^*W=\rank W= 8$.\fi
\end{proof}

\section{Applications to capacity}
\label{sec:applications-to-capacity}

Let $\cl N : X\to Y$ be a classical information channel with confusability graph $G$.
Its parallel use $r$ times can be expressed as a channel 
$\cl N^{\times r}: X^r \to Y^r$, for which 
$$p((y_s)_{s=1}^r |(x_s)_{s=1}^r) = \prod_{s=1}^r p(y_s|x_s),\ifieee$$for $x_s\in X$, $y_s\in Y$ and $s = 1,\dots,r$. \else\ \ \ x_s\in X,\; y_s\in Y,\; s = 1,\dots,r.$$\fi
Note that
$$G_{\cl N^{\times r}} = \underbrace{G_{\cl N} \boxtimes \cdots \boxtimes G_{\cl N}}_{\text{$r$ times}}.$$
The {\it Shannon capacity} of the channel $\cl N : X\to Y$ 
(or equivalently of the graph $G$) is the quantity
\[ \Theta(\cl N) = \Theta(G) = \lim_{r\to \infty} \sqrt[r]{\alpha(\cl N^{ \times r})} = \lim_{r\to \infty} \sqrt[r]{\alpha(G_{\cl N}^{\boxtimes r})}.\]
(Some authors prefer to use the logarithm of the quantities defined above.)

Similarly, if $\Phi : M_n\to M_k$ is a quantum channel, letting $\Phi^{\otimes r} : M_n^{\otimes r} \to M_k^{\otimes r}$ 
be its $r$-th power, we find 
$$\cl S_{\Phi^{\otimes r}} = \underbrace{\cl S_{\Phi} \otimes \cdots \otimes \cl S_{\Phi}}_{\text{$r$ times}}.$$
The analogue of the Shannon capacity of a quantum channel introduced in 
\cite{dsw} is the parameter
\[ \Theta(\Phi) = \lim_{r\to\infty} \sqrt[r]{\alpha(\Phi^{\otimes r})}.\]

Lov\'asz \cite{lo} introduced his famous $\vartheta$-parameter of a graph and proved that
$\alpha(\cl N) \le \vartheta(G)$ and that $\vartheta$ is multiplicative for strong graph product; hence, 
\[ \Theta(\cl N) = \Theta(G_{\cl N}) \le \vartheta(G),\]
for any classical channel,
thus giving a bound on the Shannon capacity of classical channels.  He also proved \cite[Theorem~11]{lo} that
\[ \vartheta(G) \le \beta(G),\]
so that his $\vartheta$-bound is a better bound on the capacity of classical channels than any of the bounds that we derived from complexity considerations.
However, as we will shortly show, for quantum channels, $\beta$ yields a bound on capacity that can outperform $\vartheta$.
We note that a different bound on $\Theta(G)$, based on ranks of Hermitian matrices in the operator system $\cl S_{G^c}$, 
was introduced by Haemers in \cite{haemers}. 
It is an interesting open question to formulate general non-commutative analogues of Haemers' parameter. 

Lov\'asz gave many characterisations of his parameter, but the most useful for our purposes is the expression
\[ \vartheta(G) = \max \left\{ \|I +K \|:  I+K \in M_n^+,\;  K \in \cl S_G^{\perp}\right\}.\]
The latter formula motivated \cite{dsw} to define, for any operator subsystem $\cl S$ of $M_n$,
\[ \vartheta(\cl S) = \max \left\{ \|I +K \|: I+K \in M_n^+,  \; K\in \cl S^\perp\right\};\] 
note that $\vartheta(G) = \vartheta(\cl S_G)$.
It was shown in \cite{dsw} that, for any quantum channel $\Phi$, one has 
\[ \alpha(\Phi) = \alpha(\cl S_{\Phi}) \le \vartheta(\cl S_{\Phi}).\]
However, $\vartheta$ is only supermultiplicative for tensor products of general operator systems.  This motivated \cite{dsw} to introduce a ``complete'' version, denoted $\thetatilde$, which is multiplicative for tensor products of operator systems and satisfies $\vartheta(\cl S) \le \thetatilde(\cl S)$.
This allowed them to bound the quantum capacity of a quantum channel, since
\ifieee
\begin{align*} \Theta(\Phi) = \lim_{r\to\infty} \sqrt[r]{\alpha\left(\cl S_{\Phi}^{\otimes r}\right)} 
&\le \lim_{r\to\infty} \sqrt[r]{\vartheta\left(\cl S_{\Phi}^{\otimes r}\right)} 
\\&\le \lim_{r\to\infty} \sqrt[r]{\thetatilde\left(\cl S_{\Phi}^{\otimes r}\right)} = \thetatilde\left(\cl S_{\Phi}\right).\end{align*}
\else
\begin{align*} \Theta(\Phi) = \lim_{r\to\infty} \sqrt[r]{\alpha\left(\cl S_{\Phi}^{\otimes r}\right)} 
&\le \lim_{r\to\infty} \sqrt[r]{\vartheta\left(\cl S_{\Phi}^{\otimes r}\right)} 
\le \lim_{r\to\infty} \sqrt[r]{\thetatilde\left(\cl S_{\Phi}^{\otimes r}\right)} = \thetatilde\left(\cl S_{\Phi}\right).\end{align*}
\fi
These bounds are often difficult to compute. The quantity $\lim_{r\to \infty}\sqrt[r]{\vartheta(\S_\Phi^{\otimes r})}$ requires evaluation of a limit, each term of which may be intractable, and the possibly larger bound $\thetatilde(\S_\Phi)=\sup_{n\in\bN}\vartheta(\S_\Phi\otimes M_n)$ 
requires the evaluation of a supremum, although this parameter has the advantage of possessing a reformulation as a semidefinite program~\cite{dsw}.

\begin{theorem}\label{th_boundbeta} 
For any quantum channel $\Phi$, we have 
$$\alpha(\Phi)\leq \Theta\left(\Phi\right)  \le \beta\left(\cl S_{\Phi}\right).$$
\end{theorem}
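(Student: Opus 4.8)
The plan is to prove the two inequalities separately, deducing each from tensor-power estimates on the operator system parameters. For $\alpha(\Phi)\leq\Theta(\Phi)$, the key observation is that $\alpha$ is supermultiplicative under tensor products of operator systems: if $(x_1,\dots,x_p)$ is an $\cl S_1$-independent set and $(y_1,\dots,y_q)$ is an $\cl S_2$-independent set, then $(x_i\otimes y_j)_{i\in[p],\,j\in[q]}$ is an $(\cl S_1\otimes\cl S_2)$-independent set, since $(x_i\otimes y_j)(x_{i'}\otimes y_{j'})^* = (x_ix_{i'}^*)\otimes(y_jy_{j'}^*)$ lies in $\cl S_1^\perp\otimes M_{n_2} + M_{n_1}\otimes\cl S_2^\perp = (\cl S_1\otimes\cl S_2)^\perp$ whenever $(i,j)\neq(i',j')$. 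Using the identity $\cl S_{\Phi^{\otimes r}} = \cl S_\Phi^{\otimes r}$ together with $\alpha(\Phi) = \alpha(\cl S_\Phi)$ from Remark~\ref{remark:pi=1}~(i), this gives $\alpha(\Phi^{\otimes(r+s)})\geq\alpha(\Phi^{\otimes r})\,\alpha(\Phi^{\otimes s})$, so $r\mapsto\log\alpha(\Phi^{\otimes r})$ is superadditive; Fekete's lemma then shows that the limit defining $\Theta(\Phi)$ exists and equals $\sup_r\sqrt[r]{\alpha(\Phi^{\otimes r})}$, which is at least $\alpha(\Phi)$.

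For the upper bound $\Theta(\Phi)\leq\beta(\cl S_\Phi)$, I would chain three facts for each fixed $r$. First, $\alpha(\Phi^{\otimes r}) = \alpha(\cl S_\Phi^{\otimes r})$, again by $\cl S_{\Phi^{\otimes r}} = \cl S_\Phi^{\otimes r}$ and Remark~\ref{remark:pi=1}~(i). Second, $\alpha(\cl S_\Phi^{\otimes r})\leq\beta(\cl S_\Phi^{\otimes r})$ by Theorem~\ref{thm:inncc}. Third, $\beta(\cl S_\Phi^{\otimes r})\leq\beta(\cl S_\Phi)^r$, which follows by induction on $r$ from the submultiplicativity $\beta(\cl S_1\otimes\cl S_2)\leq\beta(\cl S_1)\beta(\cl S_2)$ established in Proposition~\ref{prop:elem}~(iii). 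Combining these, $\sqrt[r]{\alpha(\Phi^{\otimes r})}\leq\beta(\cl S_\Phi)$ for every $r$, and letting $r\to\infty$ gives $\Theta(\Phi)\leq\beta(\cl S_\Phi)$.

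Since the substantive work --- the inequality $\alpha\leq\beta$ for operator systems and the tensor bounds on $\beta$ --- is already available from Theorem~\ref{thm:inncc} and Proposition~\ref{prop:elem}, I do not expect a genuine obstacle. The only point needing a little care is checking that the limit in the definition of $\Theta(\Phi)$ behaves well: one needs supermultiplicativity of $\alpha$ for the lower bound (via Fekete, or directly from $\alpha(\Phi^{\otimes kr})\geq\alpha(\Phi^{\otimes r})^k$) and submultiplicativity of $\beta$ for the upper bound. Both are routine, so the proof amounts to assembling the pieces correctly.
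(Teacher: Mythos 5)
Your proposal is correct and follows essentially the same route as the paper: the lower bound from supermultiplicativity of $\alpha$ under tensor products (which the paper cites as well known and you verify directly via the identity $(\cl S_1\otimes\cl S_2)^\perp=\cl S_1^\perp\otimes M_{n_2}+M_{n_1}\otimes\cl S_2^\perp$), and the upper bound by chaining $\alpha\leq\beta$ from Theorem~\ref{thm:inncc} with the submultiplicativity of $\beta$ from Proposition~\ref{prop:elem}~(iii). No gaps; your write-up merely spells out details the paper leaves implicit.
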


\begin{proof}
Let $\Phi$ be a quantum channel and $\cl S = \cl S_{\Phi}$. 
The inequality $\alpha\leq \Theta$ is well known, and follows immediately from the supermultiplicative property of $\alpha$. 
Since $\beta$ is submultiplicative for tensor products (Proposition \ref{prop:elem} (iii)) 
and $\alpha$ is dominated by $\beta$ (Theorem \ref{thm:inncc}), 
we have 
\[ \Theta\left(\Phi\right) = \Theta(\cl S) = \lim_{r\to\infty} \sqrt[r]{\alpha\left(\cl S^{\otimes r}\right)} \le \beta\left(\cl S\right).\qedhere\]
\end{proof}

In the remainder of the section, we will exhibit operator systems for which $\beta(\cl S) \ll \vartheta(\cl S)$.
For $k\in \bb{N}$, let 
$$\cl S_k = \{(a_{i,j})_{i,j=1}^k \in M_k : a_{1,1} = a_{2,2} = \cdots = a_{k,k}\}.$$
It is easy to show directly that \[\vartheta(\cl S_k) =k.\]
For any $m\in \bb{N}$, applying the canonical shuffle which identifies $M_k\otimes M_m$ 
with $M_m\otimes M_k$, we have
\ifieee
\begin{multline*}
  \cl S_k \otimes M_m = \left\{(A_{i,j})_{i,j=1}^k \in M_k(M_m) :\right.\\\left.\vphantom{(A_{i,j})_{i,j=1}^k} A_{1,1} = A_{2,2} = \cdots = A_{m,m}\right\}.
\end{multline*}
\else
\[
  \cl S_k \otimes M_m = \left\{(A_{i,j})_{i,j=1}^k \in M_k(M_m) : A_{1,1} = A_{2,2} = \cdots = A_{m,m}\right\}.
\]
\fi
Thus, for any operator system $\cl S\subseteq M_m$, we have 
\ifieee
\begin{multline*}\cl S_k \otimes \cl S =  \left\{(A_{i,j})_{i,j=1}^k \in M_k(\cl S) : \right.\\\left.\vphantom{(A_{i,j})_{i,j=1}^k} A_{1,1} = A_{2,2} = \cdots = A_{m,m}\right\}.\end{multline*}\else
\[\cl S_k \otimes \cl S =  \left\{(A_{i,j})_{i,j=1}^k \in M_k(\cl S) :  A_{1,1} = A_{2,2} = \cdots = A_{m,m}\right\}.\]\fi

\begin{theorem}\label{th_betabetter}
We have 
\ifieee\[\else$\fi\beta(\cl S_k\otimes\cl S_{k^2}) \leq k^2 < k^3 \leq \vartheta(\cl S_k\otimes \cl S_{k^2})
\ifieee.\]\else$.\fi
\end{theorem}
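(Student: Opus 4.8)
The plan is to treat the two outer inequalities separately; the middle inequality $k^2<k^3$ is trivial once we note that we may assume $k\ge 2$ (for $k=1$ all three quantities equal~$1$ and the statement degenerates). For the lower bound $k^3\le\vartheta(\cl S_k\otimes\cl S_{k^2})$ I would simply combine the supermultiplicativity of $\vartheta$ under tensor products of operator systems, recalled above, with the identity $\vartheta(\cl S_j)=j$: this gives $\vartheta(\cl S_k\otimes\cl S_{k^2})\ge\vartheta(\cl S_k)\,\vartheta(\cl S_{k^2})=k\cdot k^2=k^3$.

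For the upper bound, the key reduction is Proposition~\ref{prop:elem}(iii), which yields $\beta(\cl S_k\otimes\cl S_{k^2})\le\beta(\cl S_k)\,\beta(\cl S_{k^2})$. Since $\cl S_k\subseteq M_k$, Remark~\ref{remark:pi=1}(v) gives $\beta(\cl S_k)\le k$, so the whole theorem comes down to the non-obvious estimate $\beta(\cl S_{k^2})\le k$, which I expect to be the main obstacle. To prove it I would exhibit a quantum channel $\Phi\colon M_{k^2}\to M_k$ with $\cl S_\Phi\subseteq\cl S_{k^2}$, built from a family $\{U_{a,b}\colon a,b\in\bb{Z}_k\}$ of unitaries in $M_k$ that is orthogonal with respect to the trace inner product, i.e.\ $\tr(U_{a,b}^*U_{c,d})=k\,\delta_{a,c}\,\delta_{b,d}$ --- for instance the Weyl--Heisenberg operators $X^aZ^b$, where $X$ is the cyclic shift on $\bb{C}^k$ and $Z=\diag(1,\omega,\dots,\omega^{k-1})$ for a primitive $k$-th root of unity $\omega$. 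Indexing an orthonormal basis of $\bb{C}^{k^2}$ by pairs $(a,b)$, one then defines $A_i\in M_{k,k^2}$ for $i\in[k]$ to be the matrix whose $(a,b)$-th column equals $k^{-1/2}U_{a,b}e_i$.

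Two routine verifications then finish the argument. First, $\sum_{i=1}^k A_i^*A_i=I_{k^2}$: the entry of the left-hand side in row $(a,b)$ and column $(c,d)$ equals $k^{-1}\sum_i\langle U_{c,d}e_i,U_{a,b}e_i\rangle=k^{-1}\tr(U_{a,b}^*U_{c,d})=\delta_{(a,b),(c,d)}$, so $A_1,\dots,A_k$ are Kraus operators of a quantum channel $\Phi\colon M_{k^2}\to M_k$. Second, $\cl S_\Phi=\spn\{A_i^*A_j\}\subseteq\cl S_{k^2}$: the diagonal entry of $A_i^*A_j$ at position $(a,b)$ equals $k^{-1}\langle U_{a,b}e_j,U_{a,b}e_i\rangle=k^{-1}\langle e_j,e_i\rangle=k^{-1}\delta_{ij}$, which is independent of $(a,b)$, so each $A_i^*A_j$ has constant diagonal and hence lies in $\cl S_{k^2}$. (Alternatively, one could verify the hypotheses of Proposition~\ref{p_eta} directly, taking $B=[A_i^*A_j]$.) This gives $\beta(\cl S_{k^2})\le k$, whence $\beta(\cl S_k\otimes\cl S_{k^2})\le\beta(\cl S_k)\,\beta(\cl S_{k^2})\le k\cdot k=k^2$. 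The only non-mechanical point is the choice of the $U_{a,b}$: one has to recognise that orthonormality of this family for the trace inner product is exactly what forces the ``constant diagonal'' condition defining $\cl S_{k^2}$, after which the computation is immediate.
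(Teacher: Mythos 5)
Your argument is correct, but it reaches the upper bound by a genuinely different route from the paper. The paper proves $\beta(\cl S_k\otimes\cl S_{k^2})\leq k^2$ in one shot: it builds the $k^2$ unitaries $u_{kr+i}=S^{kr+i}D^r$ in $M_{k^2}$, assembles the rank-$\leq k^2$ positive block matrix $(u_{kr+i}^*u_{ks+j})$, checks by hand that its blocks lie in $\cl S_k\otimes\cl S_{k^2}$ and that the diagonal blocks sum to $kI$, and invokes Proposition~\ref{p_eta}. You instead factor the problem through the submultiplicativity of~$\beta$ (Proposition~\ref{prop:elem}(iii)), use the trivial bound $\beta(\cl S_k)\leq k$ from Remark~\ref{remark:pi=1}(v), and reduce everything to the single clean lemma $\beta(\cl S_{k^2})\leq k$, proved via a Weyl--Heisenberg channel $M_{k^2}\to M_k$; your two verifications (trace preservation from $\tr(U_{a,b}^*U_{c,d})=k\delta_{a,c}\delta_{b,d}$, and constant diagonals from unitarity of each $U_{a,b}$) are correct. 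The underlying mechanism is the same in both proofs --- orthogonality of a clock-and-shift family in the trace inner product is what forces the constant-diagonal condition --- but your packaging is more modular and arguably more illuminating: the reusable statement $\beta(\cl S_{m^2})\leq m$ already exhibits the gap $\vartheta(\cl S_{m^2})/\beta(\cl S_{m^2})\geq m$ on a single factor, whereas the paper's computation is tied to the specific tensor product $\cl S_k\otimes\cl S_{k^2}$. The cost of your route is only that the explicit Kraus operators realising the bound for $\cl S_k\otimes\cl S_{k^2}$ itself are obtained indirectly (by tensoring), and that one must separately note, as you do, that the strict middle inequality $k^2<k^3$ presumes $k\geq 2$. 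Your lower bound argument coincides with the paper's.
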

\begin{proof}
Let $\omega$ be a primitive $k$-th root of unity. 
Let $S \in M_{k^2}$ be given by $Se_i = e_{i+1}$, $i = 1,\dots,k^2$, where addition is modulo $k^2$, 
while $D \in M_{k^2}$ be the diagonal matrix with diagonal $(1,\omega,\omega^2,\dots,\omega^{k^2-1})$. 
Note that, if $D_0 \in M_k$ is the diagonal matrix with diagonal $(1,\omega,\omega^2,\dots,\omega^{k-1})$,
then $D = \underbrace{D_0 \oplus\cdots \oplus D_0}_{\text{$k$ times}}$. 
We have that $D^j S = \omega^j S D^j$ for any $j\in \bb{Z}$, and hence 
\begin{equation}\label{eq_sd}
D^j S^i = \omega^{ij} S^i D^j, \ \ \ i,j\in \bb{Z}.
\end{equation}
Any element $A$ of $M_k(\cl S_k\otimes \cl S_{k^2})$ has the form 
$A = (C_{r,s})_{r,s=0}^{k-1}$, where $C_{r,s}\in \cl S_k\otimes \cl S_{k^2}$ for all $r,s = 0,\dots,k-1$. 
In view of the remarks before the statement of the theorem,
we may write $C_{r,s} = (A_{kr + i, ks + j})_{i,j=0}^{k-1}$, where
$A_{kr + i, ks + j}\in \cl S_{k^2}$ for all $r,s,i,j = 0,\dots,k-1$, and 
$$A_{kr+1,ks+1} = A_{kr+2,ks+2} = \cdots = A_{(r+1)k,(s+1)k}, \ifieee$$ for all $r,s=1,\dots,k$.\else\ \ \mbox{ for all } 
r,s = 0,\dots,k-1.$$\fi

Let 
$$u_{kr + i} = S^{kr+i} D^r, \ \ \ r,i = 0,\dots,k-1,$$
and
$$B = (u_0,u_1,u_2,\dots,u_{k^2 -1}) \in M_{k^2,k^4}.$$
Set $B_{r,s} = (u_{kr+i}^* u_{ks+j})_{i,j=0}^{k-1}$; 
then the matrix
$$B^*B = (B_{r,s})_{r,s=0}^{k-1} = (u_{kr+i}^* u_{ks+j})_{r,s,i,j}$$ is positive and has rank at most $k^2$. 
We will show the following:
\smallskip
\begin{itemize}
\item[(i)]  $u_{kr+i}^* u_{ks+j}\in \cl S_{k^2}$, for all $r,s,i,j = 0,1,\dots,k-1$;
\smallskip
\item[(ii)] $u_{kr+1}^* u_{ks+1} = u_{kr+2}^* u_{ks+2} = \cdots = u_{kr+k-1}^* u_{ks+k-1}$, for all $r,s,i = 0,\dots,k-1$, and 
\smallskip
\item[(iii)] $\sum_{r=1}^k B_{r,r} = k I$,
\end{itemize}
\smallskip
which will imply that $\beta(\cl S_k\otimes \cl S_{k^2}) \leq k^2$. 

To show (i), note that 
$$u_{kr+i}^* u_{ks+j} = D^{-r} S^{-kr-i} S^{ks+j} D^s = D^{-r} S^{ks -kr + j - i} D^s.$$
If $ks -kr + j - i\neq 0$, then $u_{kr+i}^* u_{ks+j}$ has zero diagonal and thus belongs to $\cl S_{k^2}$. 
Suppose that $ks -kr + j - i = 0$. Then $k | (i-j)$ and hence $i = j$. If, in addition, $r \neq s$ then 
$u_{kr+i}^* u_{ks+j}$ has zero diagonal and therefore belongs to $\cl S_{k^2}$; 
if, on the other hand, $r = s$, then $u_{kr+i}^* u_{ks+j} = I$ and hence again belongs to $\cl S_{k^2}$. 

To show (ii), note that \ifieee for $i=0,\dots,k-1$, we have\fi
$$u_{kr+i}^* u_{ks+i} = D^{-r} S^{-kr-i} S^{ks+i} D^s = D^{-r} S^{k(s-r)} D^s\ifieee\else, \ \  i = 1,\dots,k-1\fi.$$

In order to show (iii), suppose that $i,j\in \{0,\dots,k-1\}$ with $i \neq j$, and, using (\ref{eq_sd}), note that 
\ifieee
\begin{align*}
\sum_{r=0}^{k-1} u_{kr+i}^* u_{kr+j} 
& = 
\sum_{r=0}^{k-1} D^{-r} S^{-kr-i} S^{kr+j} D^r 
\\&
= 
\sum_{r=0}^{k-1} D^{-r} S^{j-i} D^r\\
& =   
\left(\sum_{r=0}^{k-1} \omega^{-r(j-i)} \right) S^{j-i}. 
\end{align*}
\else
\begin{align*}
\sum_{r=0}^{k-1} u_{kr+i}^* u_{kr+j} 
& = 
\sum_{r=0}^{k-1} D^{-r} S^{-kr-i} S^{kr+j} D^r 
= 
\sum_{r=0}^{k-1} D^{-r} S^{j-i} D^r\\
& =   
\left(\sum_{r=0}^{k-1} \omega^{-r(j-i)} \right) S^{j-i}. 
\end{align*}
\fi
Since $\omega$ is a primitive root of unity, so is $\omega^{-1}$. Thus, 
$\omega^{-(j-i)}$ is a $k$-th root of unity with $\omega^{-(j-i)} \neq 1$.
It follows that $\sum_{r=0}^{k-1} \omega^{-r(j-i)} = 0$. 
Thus, $\sum_{r=0}^{k-1} u_{kr+i}^* u_{kr+j}  = 0$ whenever $i \neq j$. 
On the other hand, 
$$\sum_{r=0}^{k-1} u_{kr+i}^* u_{kr+i}  = \sum_{r=0}^{k-1} I = kI,$$
and (iii) is proved. 

By \cite[Lemma 4]{dsw},
$$\vartheta(\cl S_k\otimes \cl S_{k^2}) \geq \vartheta(\cl S_k) \vartheta(\cl S_{k^2}) = k^3,$$
and the proof is complete.
\end{proof}

\begin{corollary}\label{cor:betterbounds}
  \begin{enumerate}[(i)]
  \item The ratios \ifieee \[\text{\fi$\vartheta(\cl S) / \beta(\cl S)$ and
    $\vartheta(\cl S) / \gamma(\cl S)$\ifieee}\]\else \fi can be arbitrarily large,
    as $\cl S$ varies over all non-commutative graphs.

  \item For $\pi\in\{\beta,\gamma\}$, the ratio $\vartheta(\S)/\pi(\S)$ can be 
  arbitrarily large, as $\S$ varies over all non-commutative graphs with  $\tfrac12 \pi(\S)\leq \Theta(\S)\leq \pi(\S)$.
\end{enumerate}
\noindent Moreover, these statements hold if throughout we replace $\vartheta$ by~$\thetatilde$, the quantum Lov\'asz theta number.
\end{corollary}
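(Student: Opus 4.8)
The argument rests on Theorem~\ref{th_betabetter}, which exhibits operator systems with $\beta(\cl S_k\otimes\cl S_{k^2})\le k^2$ and $\vartheta(\cl S_k\otimes\cl S_{k^2})\ge k^3$; since $\thetatilde\ge\vartheta$, also $\thetatilde(\cl S_k\otimes\cl S_{k^2})\ge k^3$. For part~(i) with $\pi=\beta$ this already suffices: $\vartheta(\cl S_k\otimes\cl S_{k^2})/\beta(\cl S_k\otimes\cl S_{k^2})\ge k$, which is unbounded, and the same with $\thetatilde$ in place of $\vartheta$.

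For part~(i) with $\pi=\gamma$ one cannot argue directly, since $\gamma\ge\beta$ and a small value of $\beta$ gives no control on $\gamma$. Instead I would use the identity $\beta(\cl S)=\min\{\gamma(\cl T):\cl T\subseteq\cl S\text{ an operator subsystem}\}$ to pick $\cl T_k\subseteq\cl S_k\otimes\cl S_{k^2}$ with $\gamma(\cl T_k)=\beta(\cl S_k\otimes\cl S_{k^2})\le k^2$. As $\cl T_k\subseteq\cl S_k\otimes\cl S_{k^2}$, we have $\cl T_k^\perp\supseteq(\cl S_k\otimes\cl S_{k^2})^\perp$, and since $\vartheta(\cl S)=\max\{\|I+K\|:I+K\ge0,\ K\in\cl S^\perp\}$ is non-decreasing when the orthogonal complement is enlarged, $\vartheta(\cl T_k)\ge\vartheta(\cl S_k\otimes\cl S_{k^2})\ge k^3$ (and $\thetatilde(\cl T_k)\ge\vartheta(\cl T_k)\ge k^3$). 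Thus $\vartheta(\cl T_k)/\gamma(\cl T_k)\ge k\to\infty$, proving~(i); the $\thetatilde$ assertions follow since $\thetatilde\ge\vartheta$ throughout.

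For part~(ii) I must additionally make $\pi(\cl S)$ lie within a factor~$2$ of the zero-error capacity $\Theta(\cl S)$. The device is to pad with a large edgeless-graph operator system. Let $\cl D_N=\cl S_{\overline{K_N}}=\spn\{E_{i,i}:i\in[N]\}\subseteq M_N$; by Theorem~\ref{theorem:gen} and the trivial values of $\alpha,\beta,\gamma$ on the edgeless graph, $\alpha(\cl D_N)=\beta(\cl D_N)=\gamma(\cl D_N)=N$. For $\pi=\beta$ take $\cl S=(\cl S_k\otimes\cl S_{k^2})\oplus\cl D_{k^2}$, and for $\pi=\gamma$ take $\cl S=\cl T_k\oplus\cl D_{k^2}$ with $\cl T_k$ as above. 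By Proposition~\ref{prop:elem}(v), in either case $\pi(\cl S)\le k^2+k^2=2k^2$. On the other hand, a $\cl D_{k^2}$-independent set sitting inside the second summand is $\cl S$-independent, so $\Theta(\cl S)\ge\alpha(\cl S)\ge\alpha(\cl D_{k^2})=k^2$; together with $\Theta(\cl S)\le\beta(\cl S)\le\pi(\cl S)$ (Theorem~\ref{th_boundbeta} and Theorem~\ref{thm:inncc}) this gives $k^2\le\Theta(\cl S)\le\pi(\cl S)\le 2k^2$, hence $\tfrac12\pi(\cl S)\le\Theta(\cl S)\le\pi(\cl S)$, which is the constraint in~(ii). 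Finally $\vartheta$ does not drop below that of the first summand: if $I+K\ge0$ with $K$ in the orthogonal complement of $\cl S_k\otimes\cl S_{k^2}$ (resp.\ of $\cl T_k$) and $\|I+K\|$ equal to the corresponding value of $\vartheta$, then $(I+K)\oplus I\ge0$, $K\oplus0\in\cl S^\perp$, and $\|(I+K)\oplus I\|=\|I+K\|$ because $\vartheta\ge1$; so $\vartheta(\cl S)\ge k^3$ and $\thetatilde(\cl S)\ge\vartheta(\cl S)\ge k^3$. Therefore $\vartheta(\cl S)/\pi(\cl S)\ge k^3/(2k^2)=k/2\to\infty$, and likewise with $\thetatilde$.

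The two non-routine ideas are the passage to the $\beta$-optimal subsystem $\cl T_k$ in part~(i) and the padding by $\cl D_{k^2}$ in part~(ii); everything else is bookkeeping with the additivity of $\beta,\gamma$ under $\oplus$ (Proposition~\ref{prop:elem}(v)) and the easy monotonicity of $\vartheta$ and $\alpha$ under passage to a direct summand. The point needing care is the constant: one wants $\alpha(\cl D_{k^2})=\beta(\cl D_{k^2})=\gamma(\cl D_{k^2})=k^2$ to hold simultaneously, so that the capacity lower bound and the complexity upper bound for $\cl S$ differ by exactly a factor~$2$.
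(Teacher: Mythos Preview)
Your proof is correct and follows essentially the same strategy as the paper: pass to a subsystem with small $\gamma$ for part~(i), then pad with a system having $\alpha=\beta=\gamma=k^2$ for part~(ii). The only cosmetic differences are that the paper chooses its subsystem $\cl T_\gamma$ explicitly as the span of the matrices $B_{r,s}$ from the proof of Theorem~\ref{th_betabetter} (rather than abstractly via $\beta=\min_{\cl T\subseteq\cl S}\gamma(\cl T)$), and pads with $\bC I_{k^2}$ rather than your diagonal system $\cl D_{k^2}$; your direct verification of $\vartheta(\cl S_1\oplus\cl S_2)\ge\vartheta(\cl S_1)$ is arguably cleaner than the paper's detour through the inclusion into $(\cl S_k\oplus\bC)\otimes\cl S_{k^2}$.
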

\begin{proof} Since $\vartheta\leq \thetatilde$, it suffices to prove these statments for $\vartheta$.

(i) For the first ratio, consider $\S=\cl T_{\beta}:= \cl S_k \otimes \cl S_{k^2}$ and apply Theorem~\ref{th_betabetter}.
For the second ratio, let
$\cl T_{\gamma}$ be the span of the matrices $B_{r,s} \in \cl S_k \otimes \cl S_{k^2}$ appearing in the
proof of Theorem \ref{th_betabetter}.   
Then  $\cl T_{\gamma} \subseteq \cl S_k \otimes \cl S_{k^2}$ is an operator system and the set $\{ B_{r,s} \}$ is one of the terms that appear in the minimum that defines $\gamma(\cl T_{\gamma})$.
Hence,
\[ \gamma(\cl T_{\gamma}) \le \rank\left((B_{r,s})\right) \le k^2 < k^3 \le \vartheta(\cl S_k \otimes \cl S_{k^2}) \le \vartheta(\cl T_{\gamma}).\]

(ii)  For $\pi\in\{\beta ,\gamma\}$, consider $\R_{\pi}:=\T_{\pi}\oplus \bC I_{k^2}\subseteq M_{k^3+k^2}$. For $k>3$, 
by Proposition \ref{prop:elem} (v) and Remark \ref{remark:pi=1} (iv), we have
\ifieee
\begin{align*}
  \pi(\R_{\pi})=\pi(\T_{\pi})+\pi(\bC I_{k^2})
      &\leq k^2+k^2=2k^2 = 2\alpha(\bC I_{k^2})\\&\leq 2\alpha(\R_{\pi})\leq 2\Theta(\R_\pi).
\end{align*}
\else
\begin{align*}
  \pi(\R_{\pi})=\pi(\T_{\pi})+\pi(\bC I_{k^2})
      &\leq k^2+k^2=2k^2 = 2\alpha(\bC I_{k^2})\leq 2\alpha(\R_{\pi})\leq 2\Theta(\R_\pi).
\end{align*}
\fi
Since $\vartheta$ is order-reversing for inclusion of operator systems and $\R_{\pi}\subseteq(\S_{k}\otimes \S_{k^2})\oplus \bC I_{k^2} \subseteq (\S_{k}\oplus \bC)\otimes \S_{k^2}$ (the second inclusion holds up to a unitary shuffle equivalence) 
and it is easy to see that $\vartheta(\cl S \oplus \cl T)\geq \vartheta(\cl S)$ for any operator systems $\cl S$ and $\cl T$, we have
\[\vartheta(\R_{\pi})\geq \vartheta(\S_{k}\otimes \bC)\vartheta(\S_{k^2})\geq k^3.\qedhere\]
\end{proof}

\appendix

\section{Ordering capacity bounds}
\label{sec:order-proofs}

In this appendix, we briefly summarise the order relationships between
various bounds on the quantum Shannon zero-error capacity. These may
be succinctly described by the directed graph in
Figure~\ref{fig:order}. The parameters $\alpha_q$, $\chi_q$ and $\chi$ are
defined in~\cite{dsw}, and~\cite[Definition~11]{stahlke};
the reader should swap $\S$ and $\S^\perp$ when translating between
our non-commutative graphs and Stahlke's ``trace-free non-commutative
graphs''.

\begin{figure}
  \centerline{\includegraphics{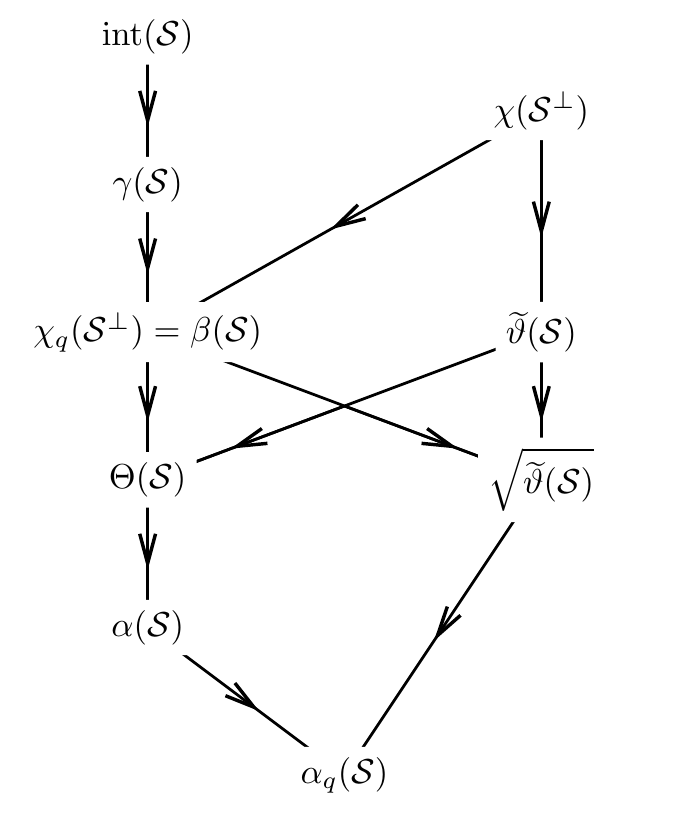}}
  \caption{A directed graph showing the partial order among various parameters bounding the quantum Shannon zero-error capacity $\Theta(\S)$ of a non-commutative graph~$\S$. The ordering $\pi_1(\S)\leq \pi_2(\S)$ for every operator system $\S\subseteq M_n$ is indicated by placing $\pi_1(\S)$ below $\pi_2(\S)$, joined with a path directed towards $\pi_1(\S)$; the absence of a directed path between a pair of vertices indicates that the corresponding parameters are incomparable.
}\label{fig:order}
\end{figure}

Let $\S\subseteq M_n$ be an operator system. As observed in Remark \ref{remark:pi=1} (ii), we
have $\chi_q(\S)=\beta(\S)$. The first two inequalities in the chain
\[\alpha_q(\S)\leq \sqrt{\thetatilde(\S)}\leq \chi_q(\S^\perp)\leq \chi(\S^\perp)\] appear in~\cite{stahlke}, following Corollary~20, and the third inequality is a simple consequence of his Proposition~9. The inequality $\alpha_q(\S)\leq \alpha(\S)$ is immediate from the definitions (and appears in~\cite[Proposition~2]{dsw}), and we have seen 
in Theorems~\ref{thm:inncc} and~\ref{th_boundbeta} that $\alpha\leq\Theta\leq\beta\leq \gamma\leq \inter$. The inequality $\Theta\leq \thetatilde$ follows immediately from~\cite[Proposition~2 and Corollary~10]{dsw}, noting that in the notation of that paper, $\log_2\Theta=C_0\leq C_{0E}$; and $\sqrt{\thetatilde}\leq \thetatilde$ is trivial.

It only remains to prove the incomparability assertions of Figure~\ref{fig:order}. These follow from the inequalities already established and the examples below.
\begin{itemize}
\item Let $G=C_5$ be the $5$-cycle, and let $\S=\S_G$. Lov\'asz has shown~\cite{lo}
  that $\vartheta(G)=\sqrt5$ while, for graph operator systems, as
  pointed out in~\cite{dsw}, we have $\thetatilde(\S_G)=\vartheta(G)$.
It is not difficult to see that 
$$\alpha(\S_G)=\alpha(G)=2<\beta(G)=\beta(\S_G).$$ 
So, in this example,
  \[\sqrt{\thetatilde(\S)}<\alpha(\S) \qand \thetatilde(\S)<\beta(\S).\]
\item Consider $G=C_6^c$, the complement of the $6$-cycle, and $\S=\S_G$. It is easy to see directly that $\gamma(\S)=\gamma(G)>2$, and $\chi(\S^\perp)=\chi(C_6)=2$, so in this case,
  \[ \chi(\S^\perp)<\gamma(\S).\]
\item Let $\S$ be the operator system of Proposition~\ref{prop:n=2,gamma=3} (i.e., in the notation of Section~\ref{sec:applications-to-capacity}, $\S=\S_2$). Note that $\alpha(\S)=1$. We claim that if $\T$ is any operator system with $\alpha(\T)=1$, 
then $\alpha(\S\otimes \T)=1$. Indeed, $\S\otimes \T$ may be identified with all 
$2\times 2$ block matrices of the form $\begin{sbmatrix}T&A\\B&T\end{sbmatrix}$ for $T,A,B\in \T$, and if $x,y$ are non-zero vectors with $xy^*\in (\S\otimes \T)^\perp$, then writing $x=\begin{sbmatrix}x_1\\x_2\end{sbmatrix}$ and $y=\begin{sbmatrix}y_1\\y_2\end{sbmatrix}$, we obtain $xy^*=(x_iy_j^*)_{i,j=1,2}\in (\S\otimes \T)^\perp$. By considering the off-diagonal entries and the condition $\alpha(\T)=1$, 
it readily follows that $x_1 = 0$ or $y_2 = 0$, and $x_2=0$ or $y_1=0$. If $x_1=0$, then $y_1=0$; 
hence, $xy^*=0\oplus x_2y_2^*$, so $x_2y_2^*\in \T^\perp$, so $x=y=0$, a contradiction. 
The other case proceeds to a similar contradiction, so $\alpha(\S\otimes \T)=1$. 
Hence, in particular, $\Theta(\S)=1$. On the other hand, $\thetatilde(\S)=2$ by~\cite[p.~1172]{dsw}; 
thus, in this case we have
\[\Theta(\S)<\sqrt{\thetatilde(\S)}.\]
\item Finally, let $\S=\bC I_2$ to obtain an example for which
  \[ \inter(\S)<\thetatilde(\S),\]
  since the left hand side is~$2$ by Remark~\ref{remark:pi=1}~(iv), and, as observed in~\cite{dsw}, the right hand side is~$4$.
\end{itemize}

\section*{Acknowledgements}
The authors are grateful to the Fields Institute and the Institut Henri Poincar\'e for financial support to attend the Workshop on Operator Systems in Quantum Information and the Workshop on Operator Algebras and Quantum Information Theory, respectively, greatly facilitating our work on this project. The first named author also wishes to thank Helena \v Smigoc and Polona Oblak for stimulating discussion of the minimum semidefinite rank.


\begin{thebibliography}{9999999999}

\bibitem[Ch75]{choi}
{\sc M. D. Choi},
{\it Completely positive linear maps on complex matrices}, 
{\rm Lin. Alg. Appl. 10 (1975), 285--290}.

\bibitem[DSW13]{dsw}
{\sc R. Duan, S. Severini, A. Winter},
{\it Zero-error communication via quantum channels, non-commutative graphs and a quantum Lov\'{a}sz $\theta$ function},
{\rm IEEE Trans. Inf. Theory 59 (2013), 1164--1174}.

\bibitem[FH13]{fh}
  {\sc S. M. Fallat, L. Hogben},
  {\it Minimum Rank, Maximum Nullity, and Zero Forcing Number of Graphs},
  {\rm Handbook of Linear Algebra, Chapman and Hall, 2013}.

\bibitem[GR01]{gr}
  {\sc C. Godsil and G. Royle},
  {\it Algebraic graph theory},
  {\rm Springer-Verlag, 2001}.

\bibitem[Hae81]{haemers}
{\sc W. Haemers},  
{\it An upper bound for the Shannon capacity of a graph}, 
{\rm Colloq. Math. Soc. J\'anos Bolyai, 25, North-Holland, Amsterdam-New York, 1981}

\bibitem[HPSWM11]{hpswm}
  {\sc G. Haynes, C. Park, A. Schaeffer, J. Webster, L. H. Mitchell},
  {\it Orthogonal vector coloring},
  {\rm Elec. J. Combin. 17 (2010), no. 1, Research Paper 55, 18 pp}.

\bibitem[HPRS15]{hprs}
  {\sc L. Hogben, K. F. Palmowski, D. E. Roberson, S. Severini},
  {\it Orthogonal representations, projective rank, and fractional minimum positive semidefinite rank: connections and new directions},
  {\rm Elec. J. Linear Algebra 32 (2017), 98--115}.

\bibitem[JMN08]{jmn}
  {\sc Y. Jiang, L. H. Mitchell, S. K. Narayan},
  {\it Unitary matrix digraphs and minimum semidefinite rank},
  {\rm Lin. Alg. Appl. 428 (2008), 1685--1695}.

\bibitem[Lo79]{lo}  
{\sc L. Lov\'asz}, {\it On the Shannon capacity of a graph}, 
{\rm IEEE Trans. Inf. Theory 25 (1979), no. 1, 1--7}.

\bibitem[MM99]{mm}
  {\sc T. McKee, F. McMorris},
  {\it Topics in intersection graph theory},
  {\rm SIAM Monographs on Discrete Mathematics and Applications, 1999}.

\bibitem[Pau02]{paulsen_2002}
{\sc V. I. Paulsen},
{\it Completely bounded maps and operator algebras},
{\rm Cambridge University Press, 2002}.

\bibitem[Pau16]{p_notes}
  {\sc V. I. Paulsen},
  {\it Entanglement and non-locality}, 
  \url{http://www.math.uwaterloo.ca/~vpaulsen/EntanglementAndNonlocality_LectureNotes_7.pdf}, 2016.

 \bibitem[Sa60]{sa} 
 {\sc G. Sabidussi}, 
 {\it Graph multiplication}, 
 {\rm Math. Z. 72 (1960), 446--457}.

\bibitem[SS12]{ss}
{\sc G. Scarpa and S. Severini},
{\it Kochen-Specker sets and the rank-1 quantum chromatic number},
{\rm IEEE Trans. Inf. Theory 58 (2012), no. 4, 2524--2529}.

\bibitem[Sha56]{shannon}
{\sc C. E. Shannon},
{\it The zero error capacity of a noisy channel},
{\rm IRE Trans. Inf. Theory 2 (1956), no. 3, 8--19}.

\bibitem[Sta16]{stahlke}
{\sc D. Stahlke},
{\it Quantum zero-error source-channel coding and non-commutative graph theory},
{\rm IEEE Trans. Inf. Theory 62 (2016), no. 1, 554--577}.

\end{thebibliography}
\end{document}